\documentclass[journal]{IEEEtran}
\usepackage{fancyhdr}
\usepackage{amsmath,amssymb}
\usepackage{amsfonts}
\usepackage{dsfont}
\usepackage{balance}
\usepackage{algpseudocode}
\usepackage{algorithm}
\usepackage{tikz}
\usepackage[hidelinks]{hyperref}    %for using hyperlinks
\usepackage{color}
\usepackage{pgfplots}           %for plotting graphs
\pgfplotsset{compat=1.16}
\usepackage{graphicx}
\usepackage{cite}
\usepackage{mathrsfs}
\usepackage{caption}
\usepackage{subcaption}
\usepackage{stfloats}
\usepackage{tikz}
\usepackage{mathdots}
\usepackage{yhmath}
\usepackage{cancel}
\usepackage{color}
\usepackage{siunitx}
\usepackage{array}
\usepackage{xcolor}
\usepackage{multirow}
\usepackage{textcomp}
\usepackage{gensymb}
\usepackage[normalem]{ulem}
\usepackage{booktabs}
\usetikzlibrary{fadings}
\usetikzlibrary{patterns}
\usetikzlibrary{shadows.blur}
\usepackage{braket}
\usepackage{enumitem}
\usepackage{tabularx}
\usepackage{array}   % for \newcolumntype
\newcolumntype{Y}{>{\raggedright\arraybackslash}X}

\newtheorem{lemma}{Lemma}

\newtheorem{proposition}{\textbf{Proposition}}

\usepackage{amsmath}
\usetikzlibrary{arrows.meta,calc}
\DeclareMathOperator\erf{erf}
\DeclareMathOperator{\erfc}{erfc}

\begin{document}
%\IEEEoverridecommandlockouts

\title{Error Probability Analysis of Quantum Communication with Phase-squeezed M-PSK}
\author{Nikos A. Mitsiou, Member, IEEE and Ioannis Krikidis, Fellow, IEEE 

\thanks{Nikos A. Mitsiou and Ioannis Krikidis are with the Department of Electrical and Computer Engineering, University of Cyprus, 1678 Nicosia, Cyprus (e-mail: nmitsi02@ucy.ac.cy, krikidis@ucy.ac.cy).}

\vspace{-0.4cm}}

\maketitle
\begin{abstract}
In this paper, we investigate the symbol error probability (SEP) of phase-squeezed \(M\)-ary phase-shift keying (\(M\)-PSK). Since the relevant observable for \(M\)-PSK detection is the optical phase, we adopt the adaptive Mark-II receiver which is a physically realizable phase measurement. First, we develop a theoretical analysis based on the phase probability operator measure (POM) of the Mark-II scheme in the Fock basis. Then, we develop two SEP methods based on the statistics of the received PSK symbol and the error introduced by the Mark-II measurement. The first method derives the phase probability density induced by the squeezed state noise and incorporates the additional Mark-II phase uncertainty through an angular convolution. Since this convolution does not admit a simple closed form, we also introduce an effective tangential-variance model, which yields a closed form SEP expression in terms of the Owen's \(T\)-function. Numerical results show that phase squeezing substantially reduces the SEP of \(M\)-PSK compared to coherent state transmission, with greater gains for higher constellation orders. Notably, for the investigated scenario, squeezing can almost double the photon efficiency of \(M\)-PSK as the mean number of transmitted photons increases. Finally, the proposed approximations closely follow the Mark-II POM analysis, typically within an accuracy of 2-4 photons, and therefore provide accurate and computationally efficient tools for analyzing phase-squeezed quantum \(M\)-PSK communication.
% This formulation expresses the SEP through a weighted Fourier series, whose weights are determined by the transmitted quantum state coefficients and the Hermitian Mark-II measurement matrix.
\end{abstract}
\begin{IEEEkeywords}
Quantum optical communication, squeezed states, $M$-PSK, Mark-II measurement, symbol error probability.
\end{IEEEkeywords}
\section{Introduction}
Quantum optical communication studies the transmission of information using  photons to encode classical information into quantum states of light which can be transmitted over optical fibers and free-space optical (FSO) channels \cite{Shapiro2009QuantumTheoryOpticalCommunications}. In particular, quantum FSO communication combines the advantages of quantum optical signal processing with the flexibility of the wireless medium and is relevant to application scenarios such as ground-to-satellite and satellite to ground optical links, as well as terrestrial FSO systems operating under strong atmospheric turbulence or stringent power constraints. More broadly, quantum FSO communication constitutes an important physical layer building block for emerging technologies such as quantum enhanced sensing and quantum key distribution, low-power optical Internet of Things, and hybrid classical quantum communication networks
\cite{Bedington2017ProgressSatelliteQKD, hanzo}. In these scenarios, path loss, limited received optical power, and atmospheric turbulence may drive the system into the photon-limited regime, where performance is fundamentally constrained by quantum noise \cite{Vasylyev2016AtmosphericQuantumChannels}. 

Coherent states of light, are recognized as the quantum mechanical states that most closely resemble the classical waves used in classical optical communications. They are minimum uncertainty Gaussian states whose mean field amplitude can be displaced to represent information symbols \cite{Kato2022}. Coherent states are widely deployed, but their isotropic noise distribution in the phase space can produce significant overlap between different states, which can restrict their distinguishability and the achievable information rates under a limited photon budget \cite{Kato2025}. As such, nonclassical Gaussian states, and in particular squeezed states, provide a promising alternative optical modulation. Squeezed states are particularly attractive because they reduce the uncertainty in one field quadrature below the coherent state level at the expense of increased uncertainty in the conjugate quadrature \cite{Weedbrook2012GaussianQuantumInformation}. This redistribution of noise makes squeezed states natural candidates for scenarios in which detection performance is quantum noise limited. 

Apart from modulation, quantum measurements are also an essential part of quantum communication because they convert the transmitted quantum state into a classical observation used for symbol detection. In optical receivers, homodyne detection measures a single field quadrature by interfering the received mode with a strong local oscillator, while heterodyne detection jointly measures both conjugate quadratures at the cost of additional measurement noise \cite{Weedbrook2012GaussianQuantumInformation}. These quadrature measurements are naturally suitable for amplitude and quadrature modulated signals. By contrast, for phase-modulated constellations such as \(M\)-PSK, the degree of freedom that carries information is the optical phase. Thus, the measurement that naturally aligns with $M$-PSK is a phase measurement. Since an ideal phase measurement is generally only a theoretical benchmark rather than a direct physical receiver, the family of adaptive dyne measurements was proposed, that provide an implementable approach to optical phase estimation \cite{Wiseman1998AdaptiveSingleShotPhaseMeasurements}. For instance, the adaptive Mark-II measurement uses feedback controlled local oscillator phases to approximate an ideal phase measurement, while introducing a phase error that asymptotically decays in relation to the mean number of photons of the quantum state.

\subsection{Literature Review}
Coherent state modulation constitutes a baseline framework for quantum
optical communication, since coherent states are experimentally accessible and well studied in the literature.
Early quantum detection theory established the fundamental limits of
discriminating nonorthogonal optical states
\cite{YuenKennedyLax1975,Helstrom1976QuantumDetection}, while receiver
designs such as the Kennedy and Dolinar receivers showed how structured
optical measurements can approach or attain optimal discrimination for coherent states \cite{kennedy1973near,Dolinar1973Receiver}.
PSK and quadrature amplitude modulation (QAM) constellations based on coherent states have
been studied both in terms of quantum detection, mutual information, and
non-orthogonality of the transmitted signal set
\cite{kato,Kato2022,Kato2025}. This line of work has also led to
the development of novel experimental and theoretical coherent receiver architectures for multi-state discrimination, including receivers that beat the standard
quantum limit for non-orthogonal PSK alphabets
\cite{Shapiro2009QuantumTheoryOpticalCommunications,TakeokaSasaki2008,
Tsujino2011QuantumReceiver,Becerra2013MPSK,Becerra2015MPSK}. These
studies established coherent state modulation as the reference point for studying quantum communications.

Beyond coherent states, squeezed and other nonclassical Gaussian states
provide additional degrees of freedom for quantum communication by shaping
the quantum uncertainty of the optical field. In the broader continuous-variable framework, Gaussian quantum states opened the way to a wide variety of tasks and applications, including quantum communication, quantum cryptography, quantum computation, quantum teleportation, and quantum state and channel discrimination \cite{BraunsteinVanLoock2005,Weedbrook2012GaussianQuantumInformation}. In particular, squeezed states were
introduced as two-photon coherent states and later recognized as a
fundamental resource for reducing quantum noise in optical measurements
\cite{Yuen1976TwoPhoton,Caves1981QuantumMechanicalNoise,
Slusher1985SqueezedLight}.  For communication systems, squeezing can redistribute noise between
conjugate quadratures, making it potentially useful when the decision
geometry is more sensitive to one field direction than the other. This
principle has been explored in quantum modulation and quantum state discrimination, including squeezed binary PSK (BPSK) communication
\cite{Chesi2018SqueezingEnhancedPSK}, non-Gaussian photon-added Gaussian
modulation \cite{10437870}, optimized squeezing for PSK state
discrimination \cite{Bhadani2022OptimizedSqueezingPSKQSD}, and
displaced-squeezed transmission over turbulence channels
\cite{Krikidis2026QuantumRotationDiversity}. Specifically, \cite{Bhadani2022OptimizedSqueezingPSKQSD} showed that for displacement receivers with optimized parameters, phase squeezing at the transmitter does not improve the error probability of BPSK and quadrature PSK (QPSK). Finally, recent experimental work has also compared coherent and squeezed light sources for direct quantum
communication, showing that squeezed states can improve communication
reliability and security \cite{Paparelle2025ExperimentalDirectQSDC}.

\section{Motivation \& Contribution}
Despite the existing literature, the effect of phase squeezing in \(M\)-PSK
constellations with \(M>4\) remains unexplored. The conclusion
reported in \cite{Bhadani2022OptimizedSqueezingPSKQSD} does not rule out
a benefit of phase squeezing in general, since that study was restricted to
BPSK and QPSK only. For low-order $M$-PSK, the correct detection decision boundaries
are orthogonal. Therefore, for BPSK and QPSK, any gain obtained by reducing the uncertainty along the angular
direction is offset by the increased uncertainty along the radial direction. However, for higher-order \(M\)-PSK, the geometry is different. The main error
mechanism is phase boundary crossing between neighboring symbols, and coherent noise distribution, i.e. circular distribution, is not necessarily optimal. This creates the possibility of assigning a symbol dependent squeezing angle, so that the low variance quadrature is
aligned with the local phase sensitive direction of each PSK symbol, while
the increased uncertainty is pushed mainly into the radial direction. Therefore, this
geometric intuition, shown in  Fig. \ref{fig:psk}, motivates a dedicated analysis for quantum communication with phase squeezing for 
higher-order \(M\)-PSK constellations.

As a consequence, this paper investigates the SEP of
phase squeezed \(M\)-PSK under a physically realizable phase receiver.
Since the available degree of freedom in \(M\)-PSK is the
optical phase, we adopt the adaptive Mark-II phase measurement and utilize
its phase probability operator measure (POM) as the quantum receiver
model. The proposed constellation is based on displaced squeezed states
with symbol dependent squeezing angles, chosen so that the low variance
quadrature is aligned with the local phase sensitive direction of each
PSK symbol, while the increased uncertainty is pushed into the
radial direction. The objective is to determine whether this
symbol dependent phase squeezing can improve the distinguishability of
higher order PSK constellations under a fixed photon budget, and to
develop both exact and tractable analytical tools for quantifying the
resulting SEP. The main contributions of the paper are summarized as
follows

\begin{itemize}
  \item We derive a SEP expression for the proposed
  phase-squeezed \(M\)-PSK constellation under the adaptive Mark-II phase
  measurement. The analysis uses the Mark-II phase POM
  and expresses the measured phase distribution of the received state as a weighted Fourier
  series determined by the Mark-II matrix and the received state's coefficients in the Fock basis.

  \item  Using a statistical analysis in the polar coordinates, a closed-form expression is derived for the phase distribution of the received quantum state, while the total measured phase distribution and the corresponding SEP are obtained numerically through angular convolution
    with the Mark-II phase error distribution.

  \item We also derive a closed-form SEP approximation by incorporating the
  Mark-II phase uncertainty into an effective tangential variance. This
  leads to a SEP in terms of Owen's
  \(T\)-function and gives direct insights into the joint effects of
  displacement, radial anti-squeezing, tangential noise reduction, and
  receiver phase uncertainty.

  \item Numerical results compare the Mark-II POM benchmark, the
  polar-domain analysis, the Owen's-\(T\) approximation, and the coherent state
  \(M\)-PSK. The results show that phase squeezing substantially reduces
  the SEP for higher-order constellations, while QPSK has zero optimal
  squeezing, in agreement with prior conclusions for low-order PSK
  \cite{Bhadani2022OptimizedSqueezingPSKQSD}. Both approximations closely
  follow the POM analysis, typically within an accuracy of
  \(2\)-\(4\) photons, with the polar-domain method being slightly more
  accurate due to its explicit angular-convolution modeling.
\end{itemize}

\section*{Structure}
The rest of the paper is organized as follows. Section~III introduces the
coherent and phase-squeezed \(M\)-PSK constellation under the fixed
total-photon constraint. Section~IV describes the phase-measurement model,
including the canonical phase POM, the adaptive Mark-II POM, and the
corresponding Fock-basis SEP expression. Section~V develops the
polar-domain SEP analysis, where the state-induced phase distribution is
derived and the Mark-II phase error is incorporated through angular
convolution. Section~VI presents the effective tangential-variance model
and derives the closed-form SEP approximation involving Owen's
\(T\)-function. Section~VII provides numerical results comparing the full
POM benchmark, the proposed approximations, and coherent state
transmission. Finally, Section~VIII concludes the paper and discusses
future research directions.

\section*{Notation}
Throughout the paper, \(j=\sqrt{-1}\) denotes the imaginary unit. For a complex number, \((\cdot)^*\), \(\Re\{\cdot\}\), \(\Im\{\cdot\}\), and \(|\cdot|\) denote complex conjugation, real part, imaginary part, and modulus, respectively. For an operator, \((\cdot)^\dagger\) denotes the Hermitian adjoint, and \([\cdot,\cdot]\) denotes the commutator. We use the standard Dirac notation; \(|\cdot\rangle\) denotes a ket, \(\langle\cdot|\) denotes a bra, \(\langle\cdot|\cdot\rangle\) denotes an inner product, and \(|\cdot\rangle\langle\cdot|\) denotes an outer product. The trace operator is denoted by \(\operatorname{Tr}\{\cdot\}\), and expectation is denoted by \(\langle\cdot\rangle\) or \(\mathbb{E}\{\cdot\}\). The probability of an event is denoted by \(\Pr\{\cdot\}\), while \(f_X(\cdot)\) denotes the probability density function of a random variable \(X\). The Gaussian \(Q\)-function is denoted by \(Q(\cdot)\), while \(\Phi(\cdot)\) and \(\phi(\cdot)\) denote the standard normal cumulative distribution function and probability density function, respectively. The error function, the complementary error function, and Owen's \(T\)-function are denoted by \(\operatorname{erf}(\cdot)\), \(\operatorname{erfc}(\cdot)\), and \(T(\cdot,\cdot)\), respectively. The determinant, floor, double-factorial, and correlation operators are written as \(\det(\cdot)\), \(\lfloor\cdot\rfloor\), \((\cdot)!!\), and \(\operatorname{Corr}(\cdot,\cdot)\), respectively. The indicator function is denoted by \(\mathbf{1}\{\cdot\}\). All angular quantities are understood modulo \(2\pi\) unless otherwise stated.

\section{System Model}
This section introduces the quantum optical constellations used throughout the paper, starting from coherent state \(M\)-PSK, which is the baseline constellation, and then extending it to the proposed phase-squeezed \(M\)-PSK.

\subsection{Coherent $M$-PSK}
We consider a single mode with annihilation and creation
operators $\hat a$ and $\hat a^\dagger$ satisfying the commutation property
$[\hat a,\hat a^\dagger]=1$.
The photon-number operator is
$ \hat n = \hat a^\dagger \hat a$. A coherent state is then obtained from the vacuum through the displacement
operator
\begin{equation}
  |\alpha\rangle = \hat D(\alpha)\,|0\rangle,
  \qquad
  \hat D(\alpha)=\exp\!\bigl(\alpha \hat a^\dagger-\alpha^* \hat a\bigr),
\end{equation}
and satisfies $\hat a |\alpha\rangle = \alpha |\alpha\rangle,
  \,
  \langle \alpha|\hat n|\alpha\rangle = |\alpha|^2$
\cite{Krikidis2026QuantumRotationDiversity}. An $M$-ary coherent state PSK constellation is formed by fixing the
modulation photon number $N_s>0$ and defining
\begin{equation}
  \alpha_m=\sqrt{N_s}\,e^{j\phi_m},
  \qquad
  \phi_m=\frac{2\pi m}{M},
  \quad m=0,\dots,M-1.
\end{equation}
Equivalently, if $|\alpha_0\rangle = |\sqrt{N_s}\rangle$ is a reference
state, each PSK symbol can be written as
\begin{equation}
  |\psi_m\rangle
  = \hat R(\phi_m)\,|\alpha_0\rangle
  = |\alpha_m\rangle
  = \hat D(\alpha_m)\,|0\rangle,
\end{equation}
where \(\hat R(\phi)=e^{j\phi\hat n}\) is the phase-space rotation operator, with
\(\hat R(\phi)|\alpha\rangle=|\alpha e^{j\phi}\rangle\) \cite{kato}. Hence, the total PSK constellation is the following set of states
\begin{equation}
  \mathcal S_{\mathrm{PSK}}
  =
  \bigl\{
    |\psi_m\rangle
    =
    |\sqrt{N_s}\,e^{j\phi_m}\rangle
  \bigr\}_{m=0}^{M-1},
\end{equation}
in which every symbol has a  mean photon number equal to $ \langle \psi_m|\hat n|\psi_m\rangle = |\alpha_m|^2 = N_s$.
\subsection{Phase-squeezed $M$-PSK}
We now construct a phase-squeezed PSK constellation under a fixed total
photon budget. Single-mode squeezing is described by
\begin{equation}
  \hat S(\zeta)
  =
  \exp\!\left[
    \frac{1}{2}
    \bigl(
      \zeta^* \hat a^2 - \zeta \hat a^{\dagger 2}
    \bigr)
  \right],
\end{equation}
with squeezing parameter $ \zeta = r e^{j\theta_s},
  \, r\ge 0$, where $r$ denotes the squeezing amplitude and $\theta_s$ denotes the squeezing angle \cite{Krikidis2026QuantumRotationDiversity}. The squeezed vacuum
$|\mathrm{SV}(\zeta)\rangle = \hat S(\zeta)|0\rangle$
contains
\begin{equation}
  \langle \mathrm{SV}(\zeta)|\hat n|\mathrm{SV}(\zeta)\rangle
  =
  \sinh^2 r
\end{equation}
photons on average. We define the displaced-squeezed state as
\begin{equation}
  |\alpha,\zeta\rangle
  =
  \hat D(\alpha)\,\hat S(\zeta)\,|0\rangle .
\end{equation}
With this operator ordering, the total mean photon number per transmitted symbol is
\begin{equation} \label{eq:Ntot_split_psk}
  N_{\mathrm{tot}} = \langle \alpha,\zeta|\hat n|\alpha,\zeta\rangle
  =
  |\alpha|^2 + \sinh^2 r,
\end{equation}
where $|\alpha|^2$ are the displacement photons and
$\sinh^2 r$ the squeezing photons. We note that throughout the paper, whenever the optimal value of squeezing $r^*$ is required with respect to the energy constraint of \eqref{eq:Ntot_split_psk}, it is calculated based on a simple exhaustive search method. To proceed, we assign to each PSK symbol a symbol-dependent squeezing parameter
\begin{equation}
  \zeta_m = r e^{j\theta_{s,m}} .
\end{equation}
For single-mode squeezing, the ellipse orientation is set by half the
phase of $\zeta_m$. Therefore, alignment with the symbol direction
$\phi_m$ requires
\begin{equation}
  \theta_{s,m}=2\phi_m,
\end{equation}
up to an additional phase of $\pi$, which swaps the squeezed and
anti-squeezed axes.
In our implementation, we use
\begin{equation}
  \zeta_m = -\,r\,e^{j2\phi_m},
\end{equation}
so that the tangential direction, i.e. the phase-sensitive direction for
PSK, is squeezed, while the radial direction is anti-squeezed. The transmitted state for symbol $m$ is then
\begin{equation}
  |\psi_m\rangle
  =
  \hat D(\alpha_m)\,\hat S(\zeta_m)\,|0\rangle,
  \qquad m=0,\dots,M-1.
\end{equation}
Geometrically, the mean amplitudes $\alpha_m$ still lie on a circle of
radius $\sqrt{N_s}$, as in standard $M$-PSK, but each symbol is now
surrounded by a rotated noise ellipse. A PSK constellation with coherent and phase-squeezed states is shown in Fig.~\ref{fig:psk}.

\begin{figure}[t]
    \centering
    \begin{subfigure}{.8\columnwidth}
        \centering
        \includegraphics[width=\linewidth]{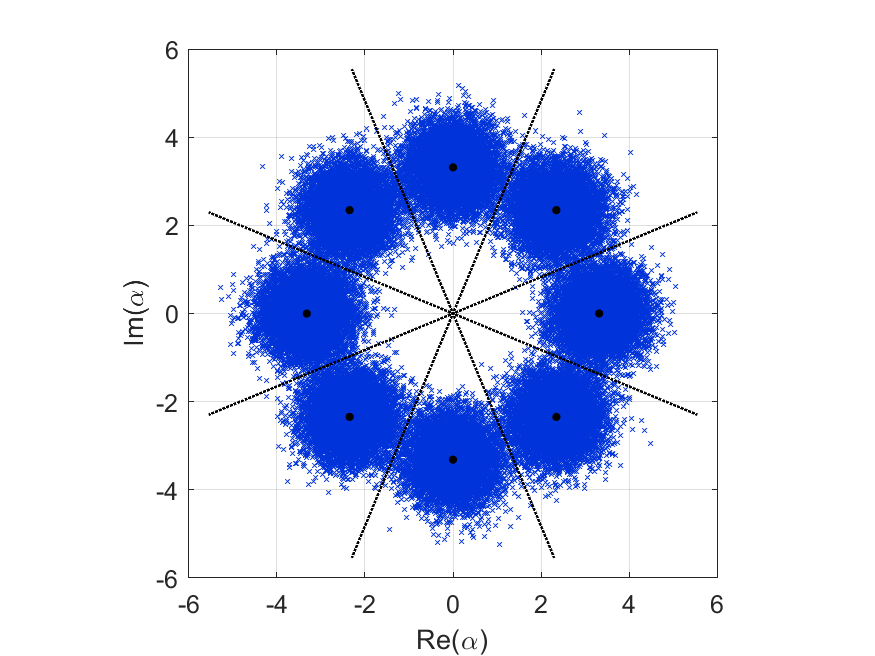}
        \caption{PSK with coherent transmission}
        \label{fig:pdf1}
    \end{subfigure}
    \begin{subfigure}{.8\columnwidth}
        \centering
        \includegraphics[width=\linewidth]{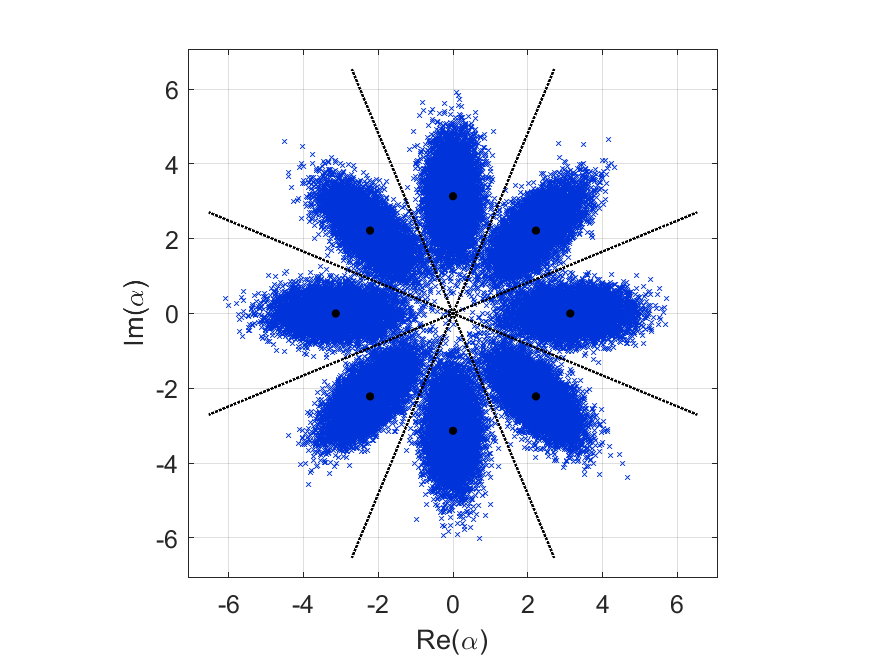}
        \caption{PSK with phase squeezing, $r=0.4$.}
        \label{fig:pdf2}
    \end{subfigure}
    \caption{PSK constellations, $M=8,\,N_\mathrm{tot}=10$.}
    \label{fig:psk}
    \vspace{-0.4cm}
\end{figure}

\section{Quantum Phase Measurement}
In this section, we describe the phase measurement used to evaluate the
performance of the proposed phase squeezed PSK modulation. The purpose of this measurement is to convert the received quantum state into a probability
density over all possible optical phases. We note that the notion of phase measurement naturally aligns with our scheme, because the transmitted information is encoded in the angle of the transmitted state in the phase space.

An important difference from the usual quantum measurements is that the phase of a quantum state
does not fit the usual projective measurement description in a
direct way. In a standard projective measurement, an observable is represented
by a Hermitian operator. The possible measurement outcomes are the eigenvalues of
that operator, while the associated eigenvectors, determine the projectors used to compute the probabilities of
those outcomes through the Born rule. For example, photon-number measurement is
described by the number operator with its eigenstates being the Fock states
\(\{|n\rangle\}_{n=0}^{\infty}\), and its possible outcomes being the integers
\(n\).

The phase of a quantum state is different. There is no standard Hermitian phase operator
with a complete set of normalizable, orthogonal phase eigenstates playing the
same role as the Fock states for photon number. Therefore, phase is most
naturally described by a phase POM
\cite{Wiseman1998AdaptiveSingleShotPhaseMeasurements}. In this description,
the measurement is specified by a family of positive operators \(F(\phi)\),
indexed by the continuous phase label \(\phi\in[0,2\pi)\). These operators
define the probability density for observing each phase outcome. This
viewpoint is illustrated in Fig.~\ref{fig:phase_measurement}. The input to the phase measurement is
the quantum state of the received symbol, and the output is a phase probability density function (PDF). The peak of this density indicates the most likely
measured phase, which ideally would be the phase of the transmitted quantum state, while its width represents the phase uncertainty caused by the
receiver and noise.

Formally, let us denote with \(\rho\) the density operator of the transmitted symbol. A phase
measurement is specified by a POM \(F(\phi)\) satisfying
\begin{equation}
  \int_0^{2\pi} F(\phi)\,d\phi = \mathbb{I},
  \label{eq:phase_pom_norm}
\end{equation}
where \(\mathbb{I}\) is the identity operator. This condition ensures that
the total probability over all phase outcomes is one. Then, the corresponding phase probability density is
\begin{equation}
  p(\phi)=\mathrm{Tr}\!\big[\rho\,F(\phi)\big].
  \label{eq:phase_PDF_general}
\end{equation}
Thus, for the PSK receiver, the object used for decision making is not a
single measured phase value alone, but the distribution \(p(\phi)\) induced by
the selected phase measurement.

\begin{figure}[t]
    \centering
    \resizebox{\columnwidth}{!}{%
        \begin{tikzpicture}[
    x=1cm,y=1cm,
    >=Latex,
    font=\Large,
    title/.style={font=\sffamily\bfseries\Large},
    mathlabel/.style={font=\huge},
    smallmath/.style={font=\Large},
    orangeText/.style={orange!90!red, font=\sffamily\bfseries\Large},
    flowarrow/.style={->, line width=1.25pt, black},
    guide/.style={black!45, dashed, line width=0.65pt},
    ring/.style={black!50, line width=0.90pt},
    axis/.style={->, line width=1.15pt, black},
    pdfcurve/.style={blue!80!black, line width=1.50pt},
    selected/.style={orange!90!red}
]

% -------------------------------------------------
% Enlarged bounding box
% -------------------------------------------------
\path[use as bounding box] (0,0) rectangle (16.75,7.85);

% =================================================
% (a) Phase-squeezed PSK constellation
% =================================================
\begin{scope}[shift={(2.85,3.68)}]
    \def\R{2.02}

    \draw[ring] (0,0) circle (\R);

    \foreach \a in {0,45,90,135,180,225,270,315}{
        \draw[guide] (0,0) -- (\a:\R);
    }

    % Phase-squeezed clouds: enlarged, radially elongated, angularly narrow
    \foreach \a in {0,90,135,180,225,270,315}{
        \begin{scope}[shift={(\a:\R)}, rotate=\a]
            \shade[
                left color=blue!25,
                right color=blue!75,
                opacity=0.75
            ] (0,0) ellipse (0.36 and 0.13);
            \fill[black] (0,0) circle (0.065);
        \end{scope}
    }

    % Selected symbol: enlarged
    \begin{scope}[shift={(45:\R)}, rotate=45]
        \shade[
            left color=orange!30,
            right color=orange!90!red,
            opacity=0.83
        ] (0,0) ellipse (0.40 and 0.14);
        \fill[black] (0,0) circle (0.065);
    \end{scope}

    % Angle labels
    \node[mathlabel, anchor=west]  at (\R+0.28,0) {$0$};
    \node[mathlabel, anchor=east]  at (-\R-0.28,0) {$\pi$};
    \node[mathlabel, anchor=south] at (0,\R+0.28) {$\pi/2$};
    \node[mathlabel, anchor=north] at (0,-\R-0.40) {$3\pi/2$};

    % Selected annotation, shifted right/up to avoid constellation labels
    \draw[->, selected, line width=1.10pt]
        (2.08,2.22) -- ($(45:\R)+(0.13,0.18)$);

    \node[orangeText, anchor=west] at (2.18,2.48) {selected symbol};
    \node[orangeText, anchor=west] at (2.18,2.06) {phase $\phi_m$};
\end{scope}

% =================================================
% (b) Phase measurement
% =================================================
\node[smallmath] at (7.95,4.82)
    {$\rho = |\psi_m\rangle\langle\psi_m|$};

\node[
    draw=black,
    rounded corners=4pt,
    minimum width=2.78cm,
    minimum height=1.38cm,
    line width=1.00pt,
    font=\sffamily\Large,
    align=center,
    inner sep=3pt
] (meas) at (7.95,3.55) {phase\\measurement};

\draw[flowarrow] (5.95,3.55) -- (6.42,3.55);
\draw[flowarrow] (9.48,3.55) -- (10.02,3.55);

% =================================================
% (c) Output: phase probability density
% =================================================
\node[smallmath] at (13.55,6.35)
    {$p(\phi)=\mathrm{Tr}\!\left[|\psi_m\rangle\langle\psi_m|F(\phi)\right]$};

\begin{scope}[shift={(10.70,1.20)}]
    \def\W{5.25}
    \def\H{4.35}

    % Main PDF peak placed between 0 and pi/2, approximately near pi/4.
    \def\xm{0.85}

    % Axes
    \draw[axis] (0,0) -- (0,\H);
    \draw[axis] (0,0) -- (\W,0);

    % y-label: moved left enough to avoid sitting on the arrow
    % \node[mathlabel, anchor=east] at (-0.16,3.02) {$p(\phi)$};

    % x-label
    \node[mathlabel, anchor=north] at (2.62,-0.78) {$\phi$};

    % Tick marks and labels
    \foreach \x/\lab in {
        0/{0},
        1.3125/{\pi/2},
        2.625/{\pi},
        3.9375/{3\pi/2},
        5.25/{2\pi}
    }{
        \draw[black, line width=0.78pt] (\x,0) -- (\x,-0.12);
        \node[smallmath, anchor=north] at (\x,-0.25) {$\lab$};
    }

    % Smooth PDF curve
    \draw[pdfcurve, domain=0:5.25, samples=120, smooth]
        plot (
            \x,
            {
              0.08
              + 3.45*exp(-((\x-\xm)/0.40)^2)
              + 0.34*exp(-((\x-2.50)/0.24)^2)
              + 0.30*exp(-((\x-4.45)/0.27)^2)
            }
        );

    % Selected phase marker
    \draw[selected, dashed, line width=1.10pt] (\xm,0) -- (\xm,4.05);
    \node[selected, font=\huge] at (\xm,4.36) {$\phi_m$};
\end{scope}

\end{tikzpicture}
    }
    \caption{Phase measurement viewpoint for phase-squeezed PSK.}
    \label{fig:phase_measurement}
\end{figure}
\subsection{Adaptive Mark II Measurements}
\label{sec:full_quantum_markII}

In this work, we adopt the adaptive
Mark~II phase measurement proposed by \cite{Wiseman1998AdaptiveSingleShotPhaseMeasurements}. For the Mark~II scheme, the phase measurement is described by a POM
density $F_{\mathrm{II}}(\phi)$ on $\phi\in[0,2\pi)$.  For any such unbiased phase measurement, this POM can be written in the
number basis as
\begin{equation}
  F_{\mathrm{II}}(\phi)
  =
  \frac{1}{2\pi}
  \sum_{m,n\ge 0}
  e^{i(m-n)\phi}\,
  H^{(\mathrm{II})}_{mn}\,
  |m\rangle\langle n|,
  \label{eq:F_markII_H}
\end{equation}
where $H^{(\mathrm{II})}$ is a positive Hermitian matrix with
$H^{(\mathrm{II})}_{nn}=1$. Thus, the Mark~II receiver is fully
characterized by the matrix $H^{(\mathrm{II})}$. A central result of \cite{Wiseman1998AdaptiveSingleShotPhaseMeasurements} is that the exact Mark~II matrix
elements are given by
\begin{equation}
  H^{(\mathrm{II})}_{mn}
  =
  \sum_{p=0}^{\lfloor m/2 \rfloor}
  \sum_{q=0}^{\lfloor n/2 \rfloor}
  \gamma_{mp}\,\gamma_{nq}\,
  \left\langle
    \left(\frac{1+C}{1+C^*}\right)^{\frac{n-m}{2}}
    C^p (C^*)^q
  \right\rangle_Q ,
  \label{eq:H_markII_exact}
\end{equation}
with
\begin{equation}
  \gamma_{mp}
  =
  \sqrt{\frac{m!}{2^p (m-2p)! p!}} .
  \label{eq:gamma_mp}
\end{equation}
Here $C$ is one of the sufficient statistics of the adaptive
measurement, and $\langle\cdot\rangle_Q$ denotes averaging over the
ostensible process introduced in the paper. Importantly, for the numerical calculation one does \emph{not} need to choose
$C$ and $C^*$ directly. Instead, \eqref{eq:H_markII_exact} is
evaluated through their ostensible moments
\begin{equation}
  M_{n,m}
  \equiv
  \langle C^n (C^*)^m\rangle_Q ,
  \label{eq:Mnm_def}
\end{equation}
which are computed recursively as follows
\begin{equation}
  M_{n,m}
  =
  \frac{n\,M_{n-1,m}+m\,M_{n,m-1}}
       {2(n-m)^2+n+m},
  \label{eq:Mnm_recurrence}
\end{equation}
with boundary values
\begin{equation}
  M_{n,0}=M_{0,n}=\frac{1}{(2n+1)!!}.
  \label{eq:Mnm_boundary}
\end{equation}
Thus, the practical route to the exact Mark-II matrix is to first compute the moment table $M_{n,m}$ from
\eqref{eq:Mnm_recurrence}-\eqref{eq:Mnm_boundary}, and  then substitute
those moments into \eqref{eq:H_markII_exact} after replacing the factor
\(\big[(1+C)/(1+C^*)\big]^{-m/2}\) by its \(K\)-th order Maclaurin
polynomial in \(C\) and \(C^*\).  An approximation for Mark II matrix was also given in \cite{Wiseman1998AdaptiveSingleShotPhaseMeasurements}, as follows
\begin{equation}
  H^{(\mathrm{II})}_{n+1,n}
  =
  1-h_{II}(n),
\end{equation}
where the Mark~II correction obeys
\begin{equation}
  h_{II}(n)
  \simeq
  \frac{1}{16\,n^{3/2}},
  \qquad n\to\infty .
  \label{eq:hII_asymptotic}
\end{equation}
For further details on the derivation and the intuition behind the Mark II receiver, the reader is referred to Section~III,~IV, and Appendix~A of \cite{Wiseman1998AdaptiveSingleShotPhaseMeasurements}.

\subsection{Mark-II Phase Distribution for Displaced-Squeezed PSK}
\label{subsec:exact_phase_PDF}

We now apply the Mark-II phase POM of
\eqref{eq:F_markII_H} to the phase-squeezed PSK symbols considered in this
work, to obtain the PDF of the measured phase. Let $  |\psi_m\rangle  = \sum_{n=0}^{\infty} c_n^{(m)} |n\rangle
  \label{eq:psi_m_fock_expansion}$
denote the Fock expansion of the \(m\)-th PSK symbol given by (13), where
\(c_n^{(m)}=\langle n|\psi_m\rangle\). The phase probability density associated with
this symbol is $p_m(\phi) =
  \mathrm{Tr}\!\big[
    \rho_m F_{\mathrm{II}}(\phi)
  \big]$. Using  \eqref{eq:F_markII_H} we obtain
\begin{align}
  p_m^{(\mathrm{II})}(\phi)
    &=
    \mathrm{Tr}\!\big[
      |\psi_m\rangle\langle\psi_m|\,
      F_{\mathrm{II}}(\phi)
    \big] \nonumber\\
    &=
    \frac{1}{2\pi}
  \sum_{k,\ell\ge 0}
      H^{(\mathrm{II})}_{\ell k}\,
      c_k^{(m)} c_\ell^{(m)\,*}\,
      e^{i(\ell-k)\phi}.
  \label{eq:p_m_markII_exact_alt}
\end{align}
Equation~\eqref{eq:p_m_markII_exact_alt} has a useful interpretation. The
displaced-squeezed symbol contributes the Fock-basis coherences
\(c_\ell^{(m)}c_k^{(m)\,*}\), while the Mark-II receiver contributes the
measurement factor \(H^{(\mathrm{II})}_{k\ell}\). Each pair of number states
\((k,\ell)\) therefore contributes a phase harmonic
\(e^{i(k-\ell)\phi}\), weighted jointly by the transmitted state and by the
Mark-II measurement matrix. In this sense, the exact Mark-II phase PDF is an
\(H^{(\mathrm{II})}\)-weighted Fourier series in the measured phase
\(\phi\).

For our numerical evaluations, the infinite Fock-basis sum in
\eqref{eq:p_m_markII_exact_alt} is truncated to a finite cutoff
\( n_{\max}\), yielding
\begin{equation}
  p_{m,n_{\max}}^{(\mathrm{II})}(\phi)
  =
  \frac{1}{2\pi}
  \sum_{k,\ell=0}^{n_{\max}}
      H^{(\mathrm{II})}_{\ell k}\,
      c_k^{(m)} c_\ell^{(m)\,*}\,
      e^{i(\ell-k)\phi}.
  \label{eq:p_m_markII_truncated}
\end{equation}
The cutoff \(n_{\max}\) is chosen large enough that the omitted Fock
components have negligible probability weight and negligible effect on the
resulting SEP. In practice, \(n_{\max}\) is increased
until the computed phase density and SEP are stable under further increases
of the cutoff. Overall, to calculate the Mark-II PDF,  first, we compute the displaced-squeezed Fock
coefficients \(c_n^{(m)}\), then we construct the Mark-II matrix
\(H^{(\mathrm{II})}\), and finally we evaluate the truncated Fourier series in
\eqref{eq:p_m_markII_truncated} and integrate it over the PSK decision
regions.

\subsection{Exact Mark~II SEP for Squeezed $M$-PSK}
\label{subsec:exact_ber_markII}
For $M$-PSK, correct symbol detection is performed only when the received symbol is inside the elliptical wedge defined by the decision regions. These regions are shown in Fig. \ref{fig:psk}. Mathematically, this correct decision region for symbol $m$ is the phase
wedge
\begin{equation}
  \mathcal W_m
  =
  [\phi_{m,-},\phi_{m,+}),
  \qquad
  \phi_{m,\pm}
  =
  \phi_m \pm \frac{\pi}{M},
  \label{eq:wedge_def_markII}
\end{equation}
with angles calculated modulo $2\pi$. Then, the correct-detection probability for symbol $m$ is the total probability mass inside the corresponding wedge, i.e.,
\begin{equation}
  P_{c,m}^{(\mathrm{II})}
  =
  \int_{\mathcal W_m} p_{m,n_{\max}}^{(\mathrm{II})}(\phi)\,d\phi ,
  \label{eq:Pc_m_exact_markII}
\end{equation}
and the exact symbol error probability is therefore
\begin{equation}
  P_{e,m}^{(\mathrm{II})}
  =
  1 - P_{c,m}^{(\mathrm{II})}
  =
  1 - \int_{\mathcal W_m} p_{m,n_{\max}}^{(\mathrm{II})}(\phi)\,d\phi .
  \label{eq:Pe_m_exact_markII}
\end{equation}
Substituting $p_{m,n_{\max}}^{(\mathrm{II})}(\phi)$ into
\eqref{eq:Pc_m_exact_markII} gives
\begin{equation}
  P_{c,m}^{(\mathrm{II})}
  =
  \frac{1}{2\pi}
  \sum_{k,\ell=0}^{n_{\max}}
  H^{(\mathrm{II})}_{k\ell}\,
  c_k^{(m)} c_{\ell}^{(m)\,*}
  \int_{\phi_{m,-}}^{\phi_{m,+}}
  e^{i(k-\ell)\phi}\,d\phi .
  \label{eq:Pc_m_exact_series}
\end{equation}
The remaining integral is
\begin{equation}
  \int_{\phi_{m,-}}^{\phi_{m,+}} e^{i(k-\ell)\phi}\,d\phi
  =
  \frac{2\pi}{M}\,
  e^{i(k-\ell)\phi_m}\,
  \text{sinc}\!\left(\frac{(k-\ell)\pi}{M}\right).
\end{equation}
Hence,
\begin{equation}
\begin{aligned}
&  P_{e,m}^{(\mathrm{II})} =\\
& 
  1
  -
  \frac{1}{M}
  \sum_{k,\ell=0}^{n_{\max}}
  H^{(\mathrm{II})}_{k\ell}\,
  c_k^{(m)} c_{\ell}^{(m)\,*}\,
  e^{i(k-\ell)\phi_m}\,
  \text{sinc}\!\left(\frac{(k-\ell)\pi}{M}\right).
  \label{eq:Pe_m_exact_final}
  \end{aligned}
\end{equation}
Assuming equiprobable symbols, the exact average symbol error
probability of the squeezed $M$-PSK constellation under the Mark~II
receiver is
\begin{equation}
  P_e^{(\mathrm{II})}
  =
  \frac{1}{M}\sum_{m=0}^{M-1} P_{e,m}^{(\mathrm{II})}.
  \label{eq:Pe_exact_constellation_final}
\end{equation}
This quantity is the ``\emph{Quantum exact}'' benchmark used throughout the numerical results of the paper. It incorporates both the proposed phase squeezed PSK, through the Fock coefficients $c_n^{(m)}$, and the exact adaptive phase measurement, through the matrix $H^{(\mathrm{II})}$.
\section{Polar-domain SEP Analysis with Mark-II Phase Error Convolution} \label{subsec:polar_markII_exact}
 \begin{figure}[t]
     \centering
     \resizebox{\columnwidth}{!}{%
         \input{polar_vertical.tikz}
     }
     \caption{A schematic of the polar-domain SEP analysis.}
     \label{fig:polar analysis}
     \vspace{-0.5cm}
 \end{figure}
While the full quantum Mark~II analysis provides the benchmark error probability, its
evaluation requires the Fock-basis truncation of the transmitted states,
the construction of the Mark~II phase-measurement matrix, and the
numerical evaluation of the resulting Fourier-series expressions. In
particular, the Mark~II matrix itself is obtained through the recursive
calculation of the ostensible moments $M_{n,m}$, by also using the Maclaurin-series
expansion of the factor $\left(\frac{1+C}{1+C^*}\right)^{\frac{n-m}{2}}$. As the photon number and the cutoff dimension increase this exact route becomes computationally cumbersome and
may also suffer from numerical sensitivity issues. It is therefore
useful to derive an alternative error probability formulation, which takes into account the statistics of the squeezed states transmitted and the Mark-II phase noise. As such, in this section we propose a statistical approach for the error probability analysis of the phase-squeezed
$M$-PSK scheme by working in the polar domain. Working in the polar domain is a natural first step, since the error introduced by the Mark-II measurement enters only the phase of the received quantum state.
%The intrinsic state noise and the 
% additional Mark-II receiver phase noise are studied as separate contributions. Then, the analysis proceeds via the joint radius-phase PDF, its marginal densities in $\rho$ and $\theta$, and
% a final convolution in phase with the Mark-II kernel.

\subsection{General Polar-domain SEP Formulation}
In this subsection we provide the step-by-step procedure to calculate the SEP of the squeezed PSK scheme by working in the polar domain. A schematic of the polar domain analysis is given in Fig. \ref{fig:polar analysis}. Without loss of generality, let us assume the
reference PSK symbol to lie on the positive real axis. Equivalently, we take the radius of the PSK symbol to be $R \triangleq |\alpha_0|=\sqrt{N_s}>0.$ Then, we can work in the local Cartesian frame whose radial axis is aligned with the
symbol direction. In this frame, the received noisy symbol in the phase-space is modeled as
\begin{equation}
  Z \triangleq X+iY = (R+u)+iv,
\end{equation}
where $u$ and $v$ denote the \emph{radial} and \emph{tangential} components of
the \emph{state-induced} noise fluctuation.
For the phase-squeezed displaced state we approximate $(u,v)$ as independent
zero-mean Gaussians as follows
\begin{equation}
  \begin{pmatrix}u\\ v\end{pmatrix}
  \sim \mathcal N\!\left(
      \begin{pmatrix}0\\ 0\end{pmatrix},
      \begin{pmatrix}
        \sigma_r^2(r) & 0\\
        0             & \sigma_t^2(r)
      \end{pmatrix}
  \right),
  \label{eq:uv_gaussian_polar}
\end{equation}
with
\begin{equation}
  \sigma_r^2(r)=\frac14 e^{2r},
  \qquad
  \sigma_t^2(r)=\frac14 e^{-2r}.
\end{equation}
We note that the Gaussian modeling of the quantum noise is a usual assumption in the literature \cite{Krikidis2026QuantumRotationDiversity}. The received symbol has Cartesian components $X=R+u$ and $Y=v$,
and therefore its polar coordinates are
\begin{equation}
  \rho_{\mathrm{st}}=\sqrt{X^2+Y^2},
  \qquad
  \Theta_{\mathrm{st}}=\operatorname{arctan2}(Y,X),
\end{equation}
so that $Z=\rho_{\mathrm{st}}e^{j\Theta_{\mathrm{st}}}$. Here, $\operatorname{arctan2}(\cdot,\cdot)$  is the principal angle measure in $[-\pi,\pi]$. It is noted that this interval choice is only a convention and all phases can be interpreted modulo \(2\pi\), so the same density is wrapped onto \([0,2\pi)\) when needed. The joint PDF of the noise components $(u,v)$ under \eqref{eq:uv_gaussian_polar} is
\begin{equation}
  f_{U,V}(u,v)=
  \frac{1}{2\pi\sigma_r\sigma_t}
  \exp\!\left(-\frac{u^2}{2\sigma_r^2}-\frac{v^2}{2\sigma_t^2}\right).
\end{equation}
Using the received-symbol polar transformation
\begin{equation}
  X=\rho\cos\theta,
  \qquad
  Y=\rho\sin\theta,
\end{equation}
together with $u=X-R$ and $v=Y$, we obtain
\begin{equation}
  u=\rho\cos\theta-R,
  \qquad
  v=\rho\sin\theta .
\end{equation}
The Jacobian determinant of the transformation from $(\rho,\theta)$ to $(u,v)$ has magnitude
\begin{equation}
  \left|
  \det
  \begin{pmatrix}
    \partial u/\partial\rho & \partial u/\partial\theta\\[2pt]
    \partial v/\partial\rho & \partial v/\partial\theta
  \end{pmatrix}
  \right|
  =
  \rho,
\end{equation}
and therefore, for $\rho\ge 0$ and $\theta\in(-\pi,\pi]$, the PDF of the received symbol in the polar domain is given by
\begin{align}
  f_{\rho,\Theta_{\mathrm{st}}}(\rho,\theta)
  &=
  f_{U,V}\bigl(\rho\cos\theta-R,\rho\sin\theta\bigr)\,\rho \nonumber\\
  &=
  \frac{\rho}{2\pi\sigma_r\sigma_t}
  \exp\!\left(
    -\frac{(\rho\cos\theta-R)^2}{2\sigma_r^2}
    -\frac{(\rho\sin\theta)^2}{2\sigma_t^2}
  \right),
  \label{eq:joint_rho_theta_state}
\end{align}
with the marginal PDFs given by
\begin{equation}
  f_\rho(\rho)=\int_{-\pi}^{\pi} f_{\rho,\Theta_{\mathrm{st}}}(\rho,\theta)\,d\theta,
  \,\,
  f_{\Theta_{\mathrm{st}}}(\theta)=\int_0^\infty f_{\rho,\Theta_{\mathrm{st}}}(\rho,\theta)\,d\rho.
\end{equation}
Thus, the next step to calculate the SEP, is to calculate the \emph{state-only} phase marginal
$f_{\Theta_{\text{st}}}(\theta)$ in closed form. This is given by the following proposition.
\begin{proposition}
For $ A(\theta)
    \triangleq \frac{\cos^2\theta}{\sigma_r^2}
             +  \frac{\sin^2\theta}{\sigma_t^2}$,   $B(\theta) \triangleq \frac{R\cos\theta}{\sigma_r^2}$, and $C \triangleq \frac{R^2}{\sigma_r^2}$,
the phase marginal $f_{\Theta_{\text{st}}}(\theta)$ is given as
    \begin{equation}
\begin{aligned}
  f_{\Theta_{\text{st}}}(\theta)
   &= \frac{e^{-R^2/(2\sigma_r^2)}}{2\pi\sigma_r\sigma_t}
     \Bigg[
       \frac{1}{A(\theta)} \\
      & + \frac{B(\theta)}{A(\theta)}
         \sqrt{\frac{\pi}{2A(\theta)}}
         \exp\!\left(\frac{B(\theta)^2}{2A(\theta)}\right)
         \erfc\!\left(
           -\frac{B(\theta)}{\sqrt{2A(\theta)}}
         \right)
     \Bigg].
  \label{eq:theta_state_marginal_closed}
\end{aligned}
\end{equation}
\end{proposition}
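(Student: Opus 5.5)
We will compute the marginal directly from the joint density \eqref{eq:joint_rho_theta_state} by integrating over $\rho\in[0,\infty)$ at fixed $\theta$. The first step is to expand the exponent as a quadratic in $\rho$:
\begin{equation*}
-\frac{(\rho\cos\theta-R)^2}{2\sigma_r^2}-\frac{\rho^2\sin^2\theta}{2\sigma_t^2}
= -\frac{A(\theta)}{2}\rho^2 + B(\theta)\rho - \frac{C}{2}.
\end{equation*}
Here $A(\theta)$, $B(\theta)$ and $C$ are exactly as defined in the statement. The constant factor $e^{-C/2}=e^{-R^2/(2\sigma_r^2)}$ then comes out of the integral, together with the prefactor $1/(2\pi\sigma_r\sigma_t)$. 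This leaves the one-dimensional integral
\begin{equation*}
I(\theta)=\int_0^\infty \rho\, e^{-\frac{A}{2}\rho^2 + B\rho}\,d\rho ,
\end{equation*}
where $A>0$ for all $\theta$, so the integral converges.

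To evaluate $I$, we complete the square. Writing $-\frac{A}{2}\rho^2+B\rho=-\frac{A}{2}(\rho-\mu)^2+\frac{B^2}{2A}$ with $\mu=B/A$, we split $\rho=(\rho-\mu)+\mu$, which gives two pieces.

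The first piece is $\int_0^\infty(\rho-\mu)e^{-\frac{A}{2}(\rho-\mu)^2}d\rho$. It is an exact derivative and equals $\frac{1}{A}e^{-\frac{A}{2}\mu^2}=\frac1A e^{-B^2/(2A)}$. Multiplied by $e^{B^2/(2A)}$, this yields the term $1/A(\theta)$.

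The second piece is $\mu\int_0^\infty e^{-\frac{A}{2}(\rho-\mu)^2}d\rho$. It is a one-sided Gaussian integral. Substituting $t=\sqrt{A/2}\,(\rho-\mu)$, it becomes
\begin{equation*}
\mu\sqrt{\tfrac{2}{A}}\int_{-\mu\sqrt{A/2}}^\infty e^{-t^2}dt=\frac{B}{A}\sqrt{\frac{\pi}{2A}}\,\erfc\!\left(-\frac{B}{\sqrt{2A}}\right).
\end{equation*}
Multiplied by $e^{B^2/(2A)}$, this gives the second bracketed term.

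Combining the two pieces with the prefactor gives \eqref{eq:theta_state_marginal_closed}. We will also check the sign conventions: for $\cos\theta<0$ we have $B<0$, and the $\erfc$ argument correctly becomes positive and small-weighted. No case split is needed, because the $\erfc$ form is valid for either sign of $B$.

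The main obstacle is bookkeeping rather than anything conceptual. We must track the factor $e^{B^2/(2A)}$ so that it cancels against the exact-derivative term but survives in the $\erfc$ term, and we must get the lower limit of the shifted integral right, since it is $-\mu\sqrt{A/2}=-B/\sqrt{2A}$. As a sanity check, we will set $\sigma_r=\sigma_t$ and confirm that the result reduces to the known phase density of a displaced circular Gaussian, i.e. the Rician phase law.
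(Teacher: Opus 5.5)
Your proposal is correct and follows essentially the same route as the paper's proof in Appendix A. Like the paper, you write the exponent as $-\frac{A}{2}\rho^2+B\rho-\frac{C}{2}$, factor out $e^{-C/2}$, complete the square, and split the shifted integral into an exact-derivative term giving $1/A$ and a one-sided Gaussian integral giving the $\erfc$ term, with the same factor $e^{B^2/(2A)}$ and lower limit $-B/\sqrt{2A}$.
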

\begin{IEEEproof}
    The proof is given in Appendix A.
\end{IEEEproof}
This expression can be evaluated for any $\theta$ and
parameters $R,\sigma_r,\sigma_t$, and serves as the starting point
for incorporating the Mark-II phase noise via convolution in the
angular domain. 

In addition, to the phase uncertainty due to the quantum noise, the Mark-II adaptive dyne receiver contributes an additional
\emph{phase} error $\Delta$ on top of the intrinsic state noise. According to \cite{Wiseman1998AdaptiveSingleShotPhaseMeasurements}, for a
mean photon number $N_s = R^2$, the added variance of the phase error scales as
\begin{equation}
  \sigma_{\text{M}}^2(N_s)
    \approx \frac{1}{8}N_s^{-3/2}.
  \label{eq:markII_gaussian_model_polar}
\end{equation}
We note that a well grounded assumption for the Mark-II phase error PDF as is modeling it as a Gaussian PDF \cite{Wiseman1998AdaptiveSingleShotPhaseMeasurements}, thus we have
\begin{equation}
  g_{\text{M}}(\Delta)
   = \frac{1}{\sqrt{2\pi}\,\sigma_{\text{M}}}
     \exp\!\left(-\frac{\Delta^2}{2\sigma_{\text{M}}^2}\right),
   \qquad \Delta\in\mathbb R.
\end{equation}
Also,  $\Theta_{\text{st}}$ and $\Delta$ are independent, while the Mark-II error does not affect the radius, since it introduces only a phase error. Therefore, the total measured phase and radius are given as
\begin{equation}
  \Theta = \Theta_{\text{st}} + \Delta,
  \qquad
  \rho_{\text{tot}} = \rho_{\text{st}}.
\end{equation}

As a consequence, the state-only phase marginal PDF $f_{\Theta_{\text{st}}}$ is mapped to
the total phase marginal PDF $f_{\Theta}$ by a one-dimensional convolution
with the Mark-II error distribution, as follows
\begin{equation}
  f_{\Theta}(\theta)
    = \int_{-\infty}^{+\infty}
        f_{\Theta_{\text{st}}}(\theta-\delta)\,
        g_{\text{M}}(\Delta)\,d\Delta
    = \bigl(f_{\Theta_{\text{st}}} * g_{\text{M}}\bigr)(\theta).
  \label{eq:theta_total_convolution_marginal}
\end{equation}
Because $f_{\Theta_{\text{st}}}$ is integrable and $g_{\text{M}}$ is a
normalized Gaussian, the integral
\eqref{eq:theta_total_convolution_marginal} converges absolutely for
all $\theta$ and defines a smooth, normalized phase PDF. The Gaussian
tails of $g_{\text{M}}$ also allow the truncation of the integration
interval to $[-\Delta_{\max},\Delta_{\max}]$ with a negligible error in the numerical evaluation. We note that finding the convolution of \eqref{eq:theta_total_convolution_marginal} in closed form is not possible due to the complex nature of \eqref{eq:theta_state_marginal_closed}. Moreover, a moment matching approach was also tested, to approximate \eqref{eq:theta_state_marginal_closed}, but failed to produce satisfactory results across all the values $r$ of interest. Nonetheless, the numerical evaluation of \eqref{eq:theta_total_convolution_marginal} is much more computational efficient and stable than the POM analysis of the Mark-II scheme.

% For $M$-PSK, with the assumption that the original symbol lies on the $x$-axis, the nearest-neighbour decision rule in phase declares
% the reference symbol if the measured phase lies in the wedge $\mathcal W_0 = \bigl\{\theta : |\theta|\le \pi/M\bigr\}$.  In terms of the joint PDF
% $f_{\rho,\Theta}$ of the received radius and phase, the correct detection probability is given as
% \begin{equation}
%   P_{\mathrm{c}}(N_s,r)
%     = \int_{0}^{\infty}
%       \int_{\mathcal W_0}
%         f_{\rho_{\text{st}},\Theta_{\text{st}}}(\rho,\theta)\,
%       d\theta\,d\rho.
%   \label{eq:Pc_polar_joint_no_conditionals}
% \end{equation}
% Equivalently, by integrating out the radius, we end up with the marginal-only expression of
% \begin{equation}
%   P_{\mathrm{c}}(N_s,r)
%     = \int_{\mathcal W_0} f_{\Theta_{\text{st}}}(\theta)\,d\theta,
%   \label{eq:Pc_from_phase_marginal_only}
% \end{equation}
% since $\mathcal W_0$ does not depend on $\rho$ and the integration in
% $\rho$ extends over $[0,\infty)$. By symmetry, all $M$ symbols have the same error probability, so the
% average SEP is
% \begin{equation}
%   P_{\mathrm{e}}(N_s,r) = 1 - P_{\mathrm{c}}(N_s,r),
%   \label{eq:Pe_polar_joint_no_conditionals}
% \end{equation}
% In case needed, the calculation of the SEP follows direclty from $P_{\mathrm{e}}(N_s,r)$ using the usual
% Gray-labelling mapping.  

For \(M\)-PSK, the nearest-neighbor phase detection
declares the reference symbol if the measured phase lies in the wedge
\(\mathcal W_0=\{\theta:|\theta|\le \pi/M\}\), with angular quantities
interpreted modulo \(2\pi\). Therefore, after incorporating the Mark-II
phase uncertainty, the correct-detection probability is
\begin{equation}
  P_{\mathrm{c}}(N_s,r)
    =
    \int_{\mathcal W_0} f_{\Theta}(\theta)\,d\theta .
  \label{eq:Pc_from_total_phase_marginal}
\end{equation}
Equivalently, since \(f_{\Theta}=f_{\Theta_{\mathrm{st}}}*g_{\mathrm{M}}\),
this expression can be evaluated by first computing the state-induced
phase marginal in \eqref{eq:theta_state_marginal_closed}, then applying
the angular convolution in
\eqref{eq:theta_total_convolution_marginal}, and finally integrating over
the decision wedge. By symmetry, all \(M\) symbols have the same error
probability, thus,
\begin{equation}
  P_{\mathrm{e}}(N_s,r)=1-P_{\mathrm{c}}(N_s,r).
  \label{eq:Pe_polar_joint_no_conditionals}
\end{equation}
In case needed, the calculation of the bit error rate (BER) of the phase-squeezed PSK follows directly from $P_{\mathrm{e}}(N_s,r)$ using the usual assumption of
the Gray code mapping.  

\section{Closed-form SEP Analysis with  Effective Tangential Variance}
In this section, we get a closed form analysis for the $M$-PSK scheme by considering the effect of the Mark-II phase error in the tangential domain instead, and then calculating the SEP in the polar domain. This way, we fold the Mark-II phase uncertainty into an effective tangential variance and we avoid the need to evaluate the convolution defined in \eqref{eq:theta_total_convolution_marginal}.
\begin{figure}[t]
    \centering
    \resizebox{\columnwidth}{!}{%
        \begin{tikzpicture}[
    x=1cm,y=1cm,
    >=Latex,
    font=\sffamily,
    paneltitle/.style={
        font=\sffamily\bfseries\normalsize,
        align=center
    },
    axis/.style={->, line width=0.95pt, black},
    guide/.style={black!55, dashed, line width=0.70pt},
    phasearc/.style={blue!75!black, dotted, line width=1.30pt},
    statevec/.style={blue!80!black, line width=1.20pt},
    markvec/.style={orange!95!red, line width=1.20pt},
    totalvec/.style={black, line width=1.20pt},
    decisionfill/.style={green!35, fill opacity=0.28},
    small/.style={font=\Large},
    eq/.style={font=\Large}
]

% =========================================================
% Tighter bounding box: controls caption spacing
% =========================================================
\path[use as bounding box] (-0.25,-1.95) rectangle (14.20,4.15);

% =========================================================
% Panel titles
% =========================================================

% =========================================================
% PANEL (a)
% =========================================================
\begin{scope}[shift={(0.35,0)}]

    % -----------------------------------------------------
    % Parameters
    % -----------------------------------------------------
    \def\Rmean{5.00}
    \def\thetaW{30}     % decision wedge half-angle
    \def\thetaM{18}     % Mark-II phase-error angle shown
    \def\yState{-1.35}
    \def\yMark{1.45}

    % -----------------------------------------------------
    % Axes
    % -----------------------------------------------------
    \draw[axis] (-0.25,0) -- (6.45,-0.1) node[anchor=west] {$X$};
    \draw[axis] (0,-1.75) -- (0,3.25) node[anchor=south] {$Y$};

    \node[font=\large, anchor=north east] at (0,0) {$O$};

    % -----------------------------------------------------
    % Decision wedge: separate from variance lengths
    % -----------------------------------------------------
    \fill[decisionfill] (0,0) -- (\thetaW:5.30) -- (-\thetaW:5.30) -- cycle;

    \draw[guide] (0,0) -- (\thetaW:5.25);
    \draw[guide] (0,0) -- (-\thetaW:5.25);

    \node[black!65, font=\small, align=center]
        at (2.20,-0.70)
        {decision\\[-1pt]region};

    % -----------------------------------------------------
    % Dotted phase-circle arc
    % This arc is just the circular phase locus, not a variance length.
    % -----------------------------------------------------
    % \draw[phasearc]
    %     ({-34}:\Rmean) arc[start angle=-34,end angle=34,radius=\Rmean];
    \draw[phasearc]
        ({-27}:\Rmean) arc[start angle=-27,end angle=27,radius=\Rmean];
    % -----------------------------------------------------
    % Mean state and squeezed cloud
    % -----------------------------------------------------
    \draw[black, line width=0.95pt] (\Rmean,0) ellipse (1.15 and 0.23);
    \fill[black] (\Rmean,0) circle (0.065);

    % Radius guide
    % \draw[blue!80!black, dashed, line width=0.75pt] (0,0) -- (\Rmean,0);
    \node[blue!80!black, font=\large, anchor=south]
        at (2.55,0.05) {$R$};

    % Local tangent through the mean
    % \draw[black!45, dashed, line width=0.65pt]
    %     (\Rmean,-1.50) -- (\Rmean,1.60);

    % -----------------------------------------------------
    % Mark-II phase variance on the same circular arc
    % Starts at the mean point (R,0)
    % -----------------------------------------------------
    \draw[->, markvec]
        (0:\Rmean) arc[start angle=0,end angle=\thetaM,radius=\Rmean];

    % Purple radius to the endpoint of the Mark-II phase variance arc
\draw[purple!80!black, dashed, line width=0.95pt]
    (0,0) -- (\thetaM:\Rmean);
    
    \node[orange!95!red, small, anchor=south west]
        at ({\Rmean*cos(\thetaM)-0.8},{\Rmean*sin(\thetaM)-0.9})
        {$\sigma_{\mathrm M}^{2}$};

    % -----------------------------------------------------
    % Tangential approximation of the same phase-error arc
    % Same start point as the arc: the mean point R
    % -----------------------------------------------------
    \draw[->, markvec] (\Rmean,0) -- (\Rmean,\yMark);

    \node[orange!95!red, small, anchor=west]
        at (\Rmean+0.18,0.75)
        {$R^2\sigma_{\mathrm M}^{2}$};

    % Dotted connector from arc endpoint to tangent endpoint
    % \draw[black!65, dotted, line width=1.00pt]
    %     ({\Rmean*cos(\thetaM)},{\Rmean*sin(\thetaM)})
    %     -- (\Rmean,\yMark);

    % -----------------------------------------------------
    % State tangential variance: representative opposite direction
    % -----------------------------------------------------
    \draw[->, statevec] (\Rmean,0) -- (\Rmean,\yState);

    \node[blue!80!black, small, anchor=west]
        at (\Rmean+0.13,-0.82)
        {$\sigma_{t}^2$};

\end{scope}

% =========================================================
% Arrow between panels
% =========================================================
\draw[->, line width=1.35pt, black] (7.00,0.20) -- (7.55,0.20);

% =========================================================
% PANEL (b)
% =========================================================
\begin{scope}[shift={(8.15,0)}]

    % -----------------------------------------------------
    % Parameters
    % -----------------------------------------------------
    \def\Rb{2.55}
    \def\hState{1.10}
    \def\hMark{1.80}
    \def\hEff{2.95}

    % Tiny offsets: visually separated but conceptually collinear
    \def\dxState{-0.06}
    \def\dxMark{0.00}
    \def\dxEff{0.06}

    % -----------------------------------------------------
    % Axis and squeezed cloud
    % -----------------------------------------------------
    \draw[axis] (-0.20,0) -- (5.35,0) node[anchor=west] {$X$};

    \draw[black, line width=0.95pt] (\Rb,0) ellipse (1.25 and 0.22);
    \fill[black] (\Rb,0) circle (0.065);

    % Local tangent guide
    \draw[black!45, dashed, line width=0.65pt]
        (\Rb,-0.25) -- (\Rb,3.15);

    % -----------------------------------------------------
    % Collinear/resultant-style variance arrows
    % all start from the same mean point and lie on the same tangent
    % -----------------------------------------------------
    \draw[->, statevec]
        ({\Rb+\dxState},0) -- ({\Rb+\dxState},\hState);

    \node[blue!80!black, small, anchor=east]
        at ({\Rb-0.18},{0.62*\hState})
        {$\sigma_{t}^2$};

    \draw[->, markvec]
        ({\Rb+\dxMark},0) -- ({\Rb+\dxMark},\hMark);

    \node[orange!95!red, small, anchor=west]
        at ({\Rb+0.14},{0.62*\hMark})
        {$R^2\sigma_{\mathrm M}^{2}$};

    \draw[->, totalvec]
        ({\Rb+\dxEff},0) -- ({\Rb+\dxEff},\hEff);

    \node[black, small, anchor=west]
        at ({\Rb+0.22},{0.83*\hEff})
        {$\sigma_{t,\mathrm{eff}}^2$};

    % Optional height reference markers
    \draw[blue!80!black, dashed, line width=0.70pt]
        ({\Rb-0.34},\hState) -- ({\Rb+0.20},\hState);

    \draw[orange!95!red, dashed, line width=0.70pt]
        ({\Rb-0.34},\hMark) -- ({\Rb+0.20},\hMark);

    \draw[black, dashed, line width=0.70pt]
        ({\Rb-0.34},\hEff) -- ({\Rb+0.20},\hEff);

    % Equation
    \node[eq, anchor=west]
        at (0.5,-1.05)
        {$\sigma_{t,\mathrm{eff}}^2
          =
          \sigma_{t}^2
          +
          R^2\sigma_{\mathrm M}^2$};

\end{scope}

\end{tikzpicture}
    }
    \vspace{0.2cm}
    \caption{The effective variance assumption.}
    \label{fig:tangential}
\end{figure}
Again, we consider the $M$-PSK symbol to be placed on the positive real axis and $X = R + u$ and $Y = v$, with $u \sim \mathcal{N}\!\bigl(0,\sigma_r^2(r,N_s)\bigr)$, and $v \sim \mathcal{N}\!\bigl(0,\sigma_{t,\mathrm{state}}^2(r)\bigr)$ as before. For small angular fluctuations, which is a rational assumption due to squeezing at the tangential direction, a phase error $\Delta$ corresponds to a
tangential displacement $v \approx R \,\Delta$.  Under this small-angle
approximation, the Mark-II contribution can be represented as an additional
tangential Gaussian noise with variance
\begin{equation}
    \sigma_{t,\mathrm{M}}^2(N_s,r)
      \simeq R^2 \sigma^2_M(\delta\phi_{\mathrm{M}}\bigr).
\end{equation}
We therefore define the overall \emph{effective} tangential variance
\begin{equation}
    \sigma_{t,\mathrm{eff}}^2(N_s,r)
      \triangleq \sigma_{t,\mathrm{state}}^2(r) 
                + \sigma_{t,\mathrm{M}}^2(N_s,r).
    \label{eq:sigma_t_eff}
\end{equation}
% and approximate the joint state and receiver noise as
% \begin{equation}
%     u \sim \mathcal{N}\!\bigl(0,\sigma_r^2(r,N_s)\bigr), 
%     \qquad
%     v \sim \mathcal{N}\!\bigl(0,\sigma_{t,\mathrm{eff}}^2(N_s,r)\bigr).
% \end{equation}
% with $u$ and $v$ being independent. The noisy symbol in the Cartesian and the polar domain is then
% \begin{equation}
% \begin{aligned}
%    X = R + u&, \quad Y = v, \\
%   \rho = \sqrt{X^2 + Y^2}&, \quad \phi = \atantwo(Y,X).
%     \end{aligned}
% \end{equation}
The effective variance assumption is depicted in Fig. \ref{fig:tangential}. Since the considered symbol lies on the $x$-axis, the nearest-neighbour detection for $M$-PSK in phase corresponds to the wedge
decision rule $\rho > 0,\,\,|\phi| < \frac{\pi}{M}$. However, differently than the previous analysis, to ensure  that $\rho>0$ we need to ensure that $R+u>0$, thus by using $\phi = \operatorname{arctan2}(v,R+u)$, we obtain the equivalent decision region in the local
$(u,v)$ coordinates
\begin{equation}
    R + u > 0,
    \qquad
    \bigl|\operatorname{arctan2}(v, R+u)\bigr| < \frac{\pi}{M},
\end{equation}
which further reduces to the following wedge boundaries
\begin{equation}
    u > -R,
    \qquad
    |v| < (R+u)\tan\theta_0,
    \label{eq:uv_wedge}
\end{equation}
with $\theta_0 = \frac{\pi}{M}$. Then, the correct-decision probability for the
reference symbol on the positive real axis is
\begin{equation}
\begin{aligned}
   & P_{\mathrm{c}}(r,N_s)\\
&      = \int_{u=-R}^{\infty}
          f_u(u;\sigma_r^2) 
          \left[
            \int_{-(R+u)\tan\theta_0}^{(R+u)\tan\theta_0}
              f_v(v;\sigma_{t,\mathrm{eff}}^2) \, dv
          \right] du,
    \label{eq:Pc_uv_double_integral}
\end{aligned}
\end{equation}
where
\begin{align}
    f_u(u;\sigma_r^2)
      &= \frac{1}{\sqrt{2\pi}\,\sigma_r}
         \exp\!\left(-\frac{u^2}{2\sigma_r^2}\right),\\[0.5ex]
    f_v(v;\sigma_{t,\mathrm{eff}}^2)
      &= \frac{1}{\sqrt{2\pi}\,\sigma_{t,\mathrm{eff}}}
         \exp\!\left(-\frac{v^2}{2\sigma_{t,\mathrm{eff}}^2}\right).
\end{align}
The SEP, $P_{\mathrm{s}}(r,N_s)$, of the proposed phase-squeezed $M$-PSK scheme under the effective tangential variance assumption is given by the following proposition.
\begin{proposition}
Let  $\gamma = \frac{\tan\theta_0}{\sigma_{t,\mathrm{eff}}}\,R$, and $\beta = \frac{\tan\theta_0}{\sigma_{t,\mathrm{eff}}}\,\sigma_r$. Then
\begin{equation}
    P_{\mathrm{s}}(r,N_s)=1-P_{\mathrm{c}}(r,N_s),
\end{equation}
with the correct detection probability given as 
\begin{equation}
  P_{\mathrm{c}}(r,N_s)
  =
  \Phi\!\left(\frac{\gamma}{\sqrt{1+\beta^2}}\right)
  +
  2T\!\left(
    -\frac{\gamma}{\sqrt{1+\beta^2}},
    -\frac{1}{\beta}
  \right),
  \label{eq:Pc_closed_form_simplified}
\end{equation}
where $T\!\left(\cdot\right)$ denotes the Owen's $T$-function \cite{Owen1956}, which for $h,a\in\mathbb{R}$ is defined as
\begin{equation}
  T(h,a)
  = \frac{1}{2\pi}\int_0^a
      \frac{\exp\!\left(-\tfrac{1}{2}h^2(1+t^2)\right)}{1+t^2}\,dt.
  \label{eq:OwenT_def}
\end{equation}
\end{proposition}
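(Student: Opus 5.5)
The plan is to turn the double integral \eqref{eq:Pc_uv_double_integral} into a bivariate Gaussian orthant probability, and then apply Owen's classical reduction of the equicorrelated bivariate normal CDF to the $T$-function \cite{Owen1956}.

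\emph{Step 1: standardize.} I would write $u=\sigma_r Z$ and $v=\sigma_{t,\mathrm{eff}}W$, where $Z,W$ are independent standard normals. The wedge \eqref{eq:uv_wedge} then reads
\begin{equation*}
Z>-R/\sigma_r,\qquad |W|<\gamma+\beta Z ,
\end{equation*}
with $\gamma,\beta$ as in the statement. The key observation is that $|W|<\gamma+\beta Z$ already forces $\gamma+\beta Z>0$. Since $\gamma/\beta=R/\sigma_r$, this is exactly $Z>-R/\sigma_r$, so the constraint $u>-R$ is redundant. Hence
\begin{equation*}
P_{\mathrm c}=\Pr\{-\gamma-\beta Z<W<\gamma+\beta Z\}=\Pr\{W-\beta Z<\gamma,\;-W-\beta Z<\gamma\}.
\end{equation*}
This removes the lower integration limit, which is what makes a closed form possible.

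\emph{Step 2: reduce to a bivariate normal CDF.} Define $S_{1,2}=(\pm W-\beta Z)/\sqrt{1+\beta^2}$. These are jointly Gaussian and standard normal, with correlation
\begin{equation*}
\varrho=\frac{\beta^2-1}{1+\beta^2}.
\end{equation*}
Therefore $P_{\mathrm c}=\Phi_2(h,h;\varrho)$ with $h=\gamma/\sqrt{1+\beta^2}$, where $\Phi_2$ denotes the bivariate standard normal CDF.

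\emph{Step 3: apply Owen's identity.} Owen's identity for equal arguments is
\begin{equation*}
\Phi_2(h,h;\varrho)=\Phi(h)-2T\!\left(h,\sqrt{\tfrac{1-\varrho}{1+\varrho}}\right).
\end{equation*}
A direct computation gives $(1-\varrho)/(1+\varrho)=1/\beta^2$, so $P_{\mathrm c}=\Phi(h)-2T(h,1/\beta)$. Finally, from \eqref{eq:OwenT_def} the $T$-function is even in $h$ and odd in $a$. This rewrites the result as $\Phi(h)+2T(-h,-1/\beta)$, which is \eqref{eq:Pc_closed_form_simplified}.

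\emph{Main obstacle.} I expect the delicate part to be Step 3: quoting Owen's identity with the correct sign convention and branch for $\varrho$, which here can be negative when $\beta<1$. To guard against a sign error, I would verify the identity directly. Differentiate both sides in $h$: the left side gives $2\phi(h)\Phi\!\bigl(h\sqrt{(1-\varrho)/(1+\varrho)}\bigr)$. The $h$-derivative of $T(h,a)$ can be computed from \eqref{eq:OwenT_def}. Then check the limit $h\to\infty$, where both sides tend to $1$.

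As a fallback that avoids quoting Owen's identity, one can instead write $P_{\mathrm c}=\int_{-\gamma/\beta}^{\infty}\phi(z)\,[2\Phi(\gamma+\beta z)-1]\,dz$. This integral can then be evaluated using the standard result for $\int_{x}^{\infty}\phi(z)\Phi(a+bz)\,dz$ in terms of Owen's $T$. That route is longer, since the finite lower limit produces additional $T$-terms that must collapse to the single term in \eqref{eq:Pc_closed_form_simplified}.
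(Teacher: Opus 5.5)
Your proof is correct, and it takes a genuinely different and shorter route than the paper. The paper keeps the radial truncation. It evaluates the inner $v$-integral as an $\erf$ and writes $P_{\mathrm c}=2I_1-\Phi(R/\sigma_r)$, with $I_1=\int_{x_0}^{\infty}\phi(x)\Phi(\gamma+\beta x)\,dx$ and $x_0=-R/\sigma_r$. It then applies a general lemma for truncated integrals of this type. That lemma comes from the unequal-argument Owen representation of $\Phi_2\bigl(x_0,\gamma/\sqrt{1+\beta^2};-\beta/\sqrt{1+\beta^2}\bigr)$, with two $T$-terms and the branch indicator $\mathbf{1}_{\{\gamma/x_0<0\}}$. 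The closed form only appears after noticing $\gamma+\beta x_0=0$, which kills one $T$-term, and after $\Phi(x_0)$, the indicator and $\Phi(R/\sigma_r)$ cancel.

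Your observation that $|W|<\gamma+\beta Z$ already forces $Z>-R/\sigma_r$ is exactly the structural reason for that cancellation. The truncation point is the apex of the wedge, so the constraint $u>-R$ carries no information. Dropping it gives $P_{\mathrm c}=\Phi_2(h,h;\varrho)$ with $\varrho=(\beta^2-1)/(\beta^2+1)$. The equal-argument Owen identity together with $(1-\varrho)/(1+\varrho)=1/\beta^2$ then yields $\Phi(h)-2T(h,1/\beta)$ immediately, with no branch bookkeeping since $h>0$.

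Your planned check is also sound, and it makes the argument self-contained for either sign of $\varrho$. From the definition, $\partial_h T(h,a)=-\phi(h)\bigl(\Phi(ah)-\tfrac12\bigr)$. Hence both sides have $h$-derivative $2\phi(h)\Phi(h/\beta)$, and both tend to $1$ as $h\to\infty$.

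The paper's route buys generality: its lemma handles an arbitrary truncation point, so it would still apply if the radial constraint did not coincide with the wedge apex. Yours buys brevity and an interpretation of $P_{\mathrm c}$ as a symmetric bivariate-normal probability, whose correlation $\varrho$ shows the radial/tangential tradeoff through $\beta$. It also explains why the extra terms in the paper must collapse. Your fallback route is essentially the paper's proof.
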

\begin{IEEEproof}
The proof is given in Appendix B.
\end{IEEEproof}
Some useful insights from closed form SEP are given below.
\begin{itemize}
  \item Squeezing reshapes the quantum noise rather than uniformly
  reducing it. In the phase-squeezed configuration, the tangential
  variance is reduced while the radial variance is increased. This
  tradeoff appears explicitly in the denominator
  \(\sqrt{\sigma_{t,\mathrm{eff}}^2+\tan^2\!\theta_0\,\sigma_r^2}\).
  Thus, the error probability depends on the joint effect of noise on the
  tangential and radial fluctuations.
  \item The Mark-II contribution enters through the effective tangential
  variance \(\sigma_{t,\mathrm{eff}}^2\). When
  \(\sigma_{t,\mathrm{eff}}^2\) dominates the denominator, improving the
  phase measurement directly increases the normalized separation. When
  \(\tan^2\!\theta_0\,\sigma_r^2\) dominates instead, radial
  anti-squeezing limits the benefit of reducing the Mark-II phase-noise
  contribution.
  \item Under the fixed photon-number constraint, the optimal squeezing
  generally results from a balance between displacement and noise shaping.
  Increasing \(r\) reduces the tangential state noise, but it also
  increases the radial anti-squeezed variance and reduces the displacement
  energy available. The simplified expression therefore supports
  the existence of an intermediate operating regime in which the
  constellation remains sufficiently separated while the phase-sensitive
  noise is sufficiently suppressed.
\end{itemize}
Interestingly, this is the second time in the literature that the Owen's $T$-function appears when calculating the SEP of a constellation. The first, was in \cite{oikonomou2025constellation}, in which the effect of phase noise on the constellation design was studied. This suggests a potential analogy between phase errors in classical communication and phase uncertainty in quantum communication introduced by squeezing and phase measurement. 
\section{Numerical Results}
% This section evaluates the proposed phase-squeezed $M$-PSK modulation under the fixed total-photon constraint of \eqref{eq:Ntot_split_psk} by comparing (i) the full quantum Mark-II receiver model, (ii) the Mark-II quantum approximation, (iii) the proposed polar-domain approximation, (iv) the proposed $T$-Owens approximation and (v) the coherent-transmission case. An expected observation is that creating the matrix $H_{n,m}^{II}$ is getting more expensive as the the PSK modulation order increases, or while the SEP decreases. This is attributed to the fact that the available information is spread out to more distant Fock basis and the truncation order increases accordingly. Moreover, for the matrix to be well behaved and to follow the theory, as the truncation value increases the McLauring terms have also increase. The values we used are $n_\mathrm{cut} = 80, 100, 150, 200$ and for the MacLauring approximation $50, 100, 150, 200$ in respect. Also, whenever needed, $10^6$ Monte Carlo samples were used to extract the SER.
This section evaluates the proposed phase-squeezed $M$-PSK modulation under the fixed total-photon constraint of \eqref{eq:Ntot_split_psk} by comparing (i) the ``Quantum exact" scheme which is calculated by (30) and (31) (ii) the ``\emph{Quantum approximation}" scheme which occurs from calculating SEP by (22) and (31) (iii) the ``\emph{Analysis - polar domain}" scheme which is calculated by (47)-(48) and Proposition 1 (iv) the ``\emph{Analysis $T$-Owens}" which is given by Proposition 2 and (v) the \emph{no squeezing} scheme, which is $M$-PSK with coherent states. A practical consideration in the full quantum evaluation is the truncation of the Fock expansion. As $N_{\mathrm{tot}}$ increases and/or as the target error probabilities become smaller, the displaced-squeezed symbol states exhibit non-negligible weight over a wider range of photon numbers. Accurate evaluation of the Mark-II phase distribution, therefore, requires a larger cutoff $n_{\mathrm{max}}$. In our simulations we used $n_{\mathrm{max}}\in\{80,100,150,200\}$ depending on the operating point. When increasing $n_{\mathrm{max}}$, the Maclaurin-series truncation order $K$ also increased accordingly, until we obtained a Hermitian positive semidefinite $H^{\mathrm{II}}$ with unit diagonal. The values for $K$ were in the set $\{50,100,200,300\}$. We note that the theoretical results of Proposition 1 and 2 were validated with $10^6$ Monte Carlo samples, but the evaluation is omitted for clarity, since the aim is to compare the proposed SEP analysis with the POM based analysis. 

\begin{figure}[t]
    \centering
    \includegraphics[width=\linewidth]{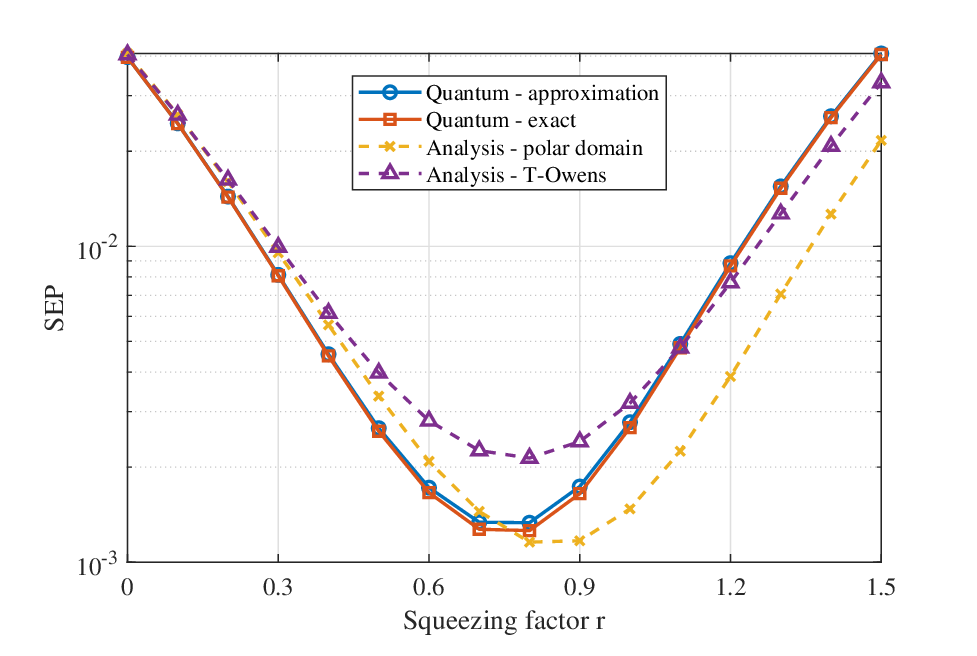}
    \caption{SEP vs the squeezing factor $r$, $M = 16,\, N_\mathrm{tot}=30$.}
    \label{fig:servsr}
\end{figure}

\begin{figure}[t]
    \centering
    \includegraphics[width=.9\linewidth]{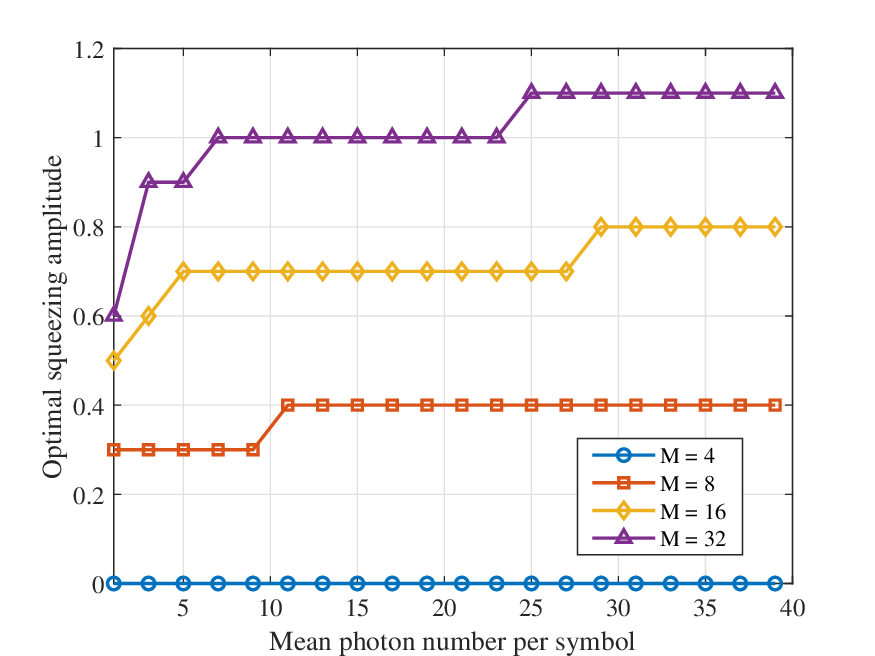}
    \caption{Optimal squeezing factor $r^*$ vs $N_\mathrm{tot}$.}
    \label{fig:vsr}
    \vspace{-0.15cm}
\end{figure}

% Fig.~\ref{fig:servsr} plots the SEP versus the squeezing magnitude $r$ for $M=16$ and $N_{\mathrm{tot}}=30$. Due to the power constraint of \eqref{eq:Ntot_split_psk}, the curve is non-monotone, exhibiting a clear minimum. For small $r$, increasing squeezing reduces the effective angular spread of the quantum state around each constellation point and therefore reduces the overlap between different points. But, beyond a certain point, further squeezing significantly reduces the modulation photons $N_s$, and hence the radius $R$, weakening the separation of neighbouring symbols and increasing errors. This observation explains why both the proposed polar-domain analysis and the $T$-Owens approximation have to consider as error not only the received phase crossing the wedge boundary, but also the radius to be greater than zero.
Fig.~\ref{fig:servsr} shows the SEP as a function of the squeezing magnitude \(r\) for \(M=16\) and \(N_{\mathrm{tot}}=30\). Notably, for this setup SEP drops from $0.04$ to roughly $0.0015$, over an order of magnitude. Because of the power constraint in \eqref{eq:Ntot_split_psk}, the SEP varies non-monotonically and the curve exhibits a clear minimum. When \(r\) is small, increasing the squeezing reduces the angular variance of the received quantum state around each constellation point, which in turn lowers the overlap between neighboring symbols. As \(r\) increases further, however, squeezing consumes a larger fraction of the fixed photon budget, thereby reducing the modulation photon number \(N_s\) and shrinking the constellation radius \(R\). As a result, adjacent symbols become less separated and the SEP starts to rise again. This tradeoff is also reflected in the error analysis. In both the polar-domain treatment and the $T$-Owen's  approximation, errors arise not only from phase excursions across the phase decision boundaries, but also from samples lying in the opposite radial half-plane. It is worth noting that all schemes follow the same non-monotonic SEP trend, and predict similar if not identical optimal squeezing values, suggesting that the proposed approximations capture the SEP behavior accurately over the range considered. Moreover, the polar-domain scheme appears to provide the closer fit. This is because it is based on the convolution in \eqref{eq:theta_total_convolution_marginal}. By contrast, the $T$-Owen's  scheme combines the two variances in \eqref{eq:sigma_t_eff}, which is effectively analogous to assuming the convolution of two Gaussian random variables, which is not necessarily true for all values of \(r\) and \(N_s\).
   
\begin{figure}[!t]
    \centering
    \begin{subfigure}{\columnwidth}
        \centering
        \includegraphics[width=\linewidth]{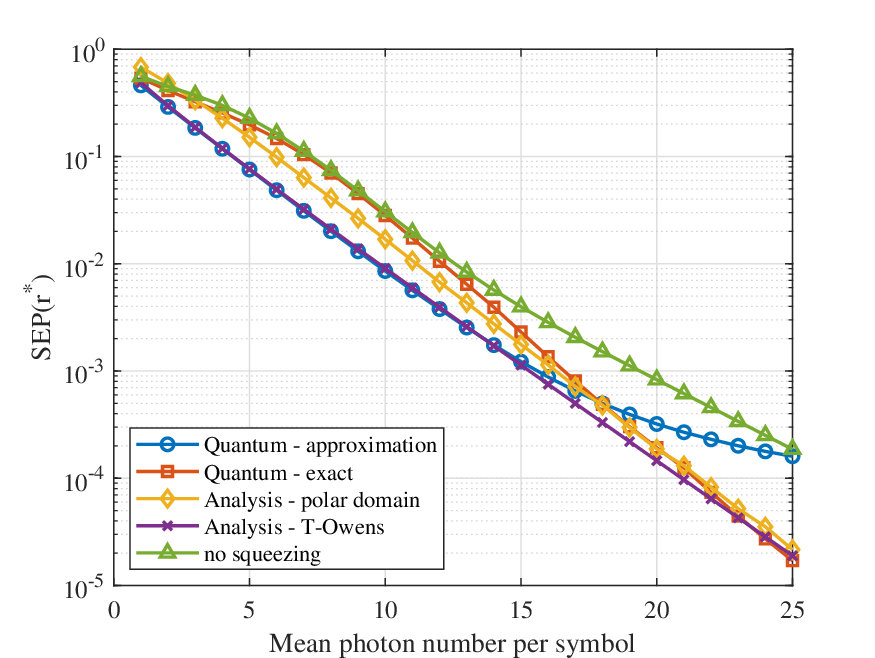}
        \caption{SEP vs $N_\mathrm{tot}$ for $M = 8$.}
        \label{fig:sub1}
    \end{subfigure}
    \begin{subfigure}{\columnwidth}
        \centering
        \includegraphics[width=\linewidth]{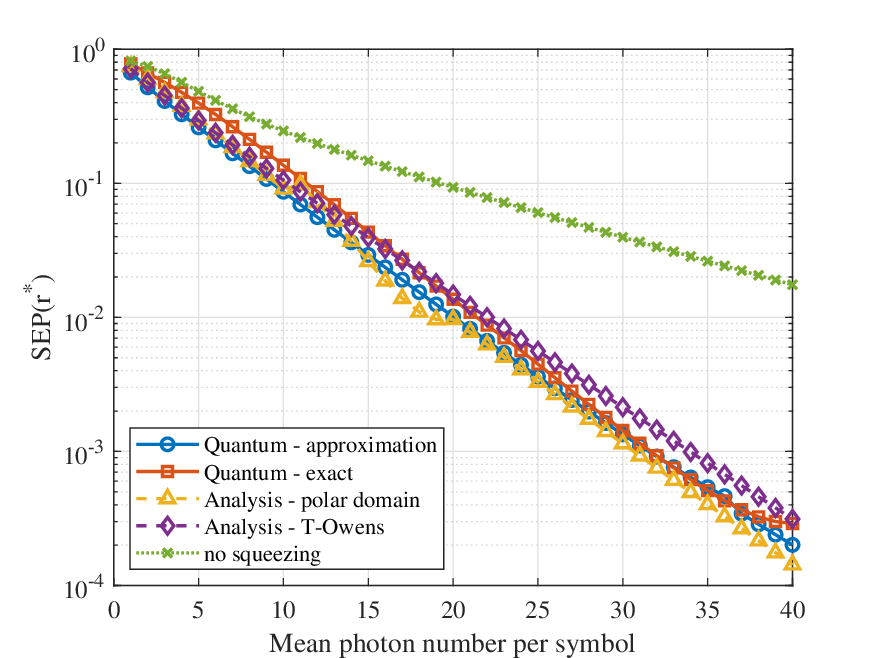}
        \caption{SEP vs $N_\mathrm{tot}$ for $M = 16$.}
        \label{fig:sub2}
    \end{subfigure}
    \begin{subfigure}{\columnwidth}
        \centering
        \includegraphics[width=\linewidth]{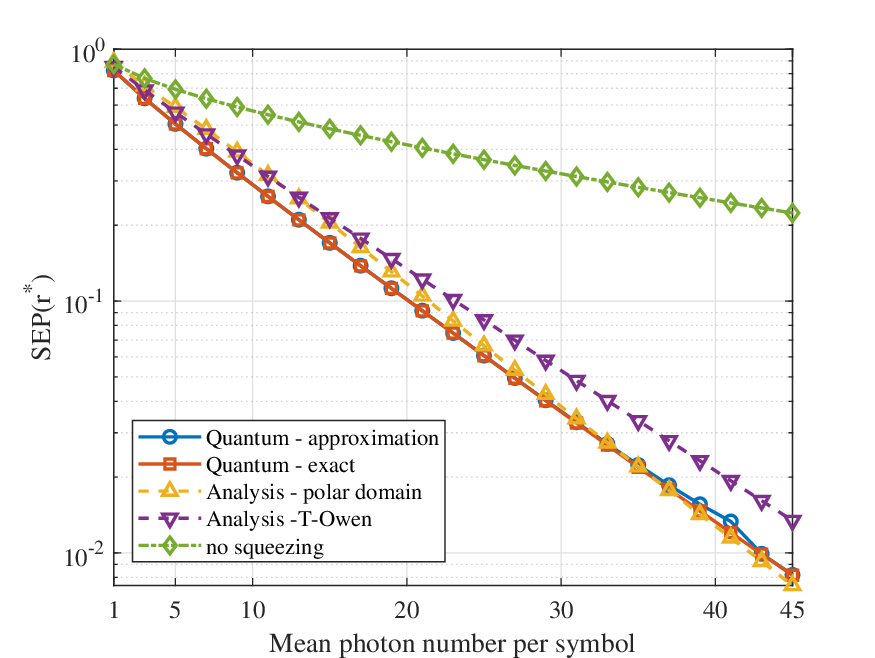}
        \caption{SEP vs $N_\mathrm{tot}$ for $M = 32$.}
        \label{fig:sub3}
    \end{subfigure}
    \caption{SEP vs $N_\mathrm{tot}$ for different modulation orders.}
    \label{fig:three_vertical}
\end{figure}

% Fig.~\ref{fig:vsr} show the optimal $r^*$ versus $N_{\mathrm{tot}}$) and summarizes the same balance in the form of an optimizer trajectory. To gain the optimal squeezing value the SEP based on the full quantum Mark-II scheme was calculated. As $N_{\mathrm{tot}}$ increases, the optimal $r^*$ remains in an intermediate regime rather than growing without bound. While larger $N_{\mathrm{tot}}$ allows more squeezing without collapsing $N_s$, excessive squeezing still diverts energy away from displacement and eventually reduces the effect of squeezing. That's why, from a point on, optimal squeezing seems to remain constant, since further squeezing probably provides no additional gain. Also, it verifies that our analysis complements the analysis of \cite{Djordjevic} since for a QPSK we also get $r^*=0$, which is rational since the optimal noise distribution for a QPSK is that of a circular noise. What we also note is that the optimal values of squeezing also increase as the order of the constellation increases, since the error is mainly defined by the angular distance between symbols and not their radial distance. That's why for the same $N_\mathrm{tot}$ higher order constellations have a greater optimal squeezing value. 
Fig.~\ref{fig:vsr} shows the optimal squeezing value \(r^*\) as a function of \(N_{\mathrm{tot}}\). To obtain \(r^*\), the SEP was evaluated using the full quantum Mark-II scheme, via an exhaustive search with a step of 0.1. As \(N_{\mathrm{tot}}\) increases, the optimal squeezing remains in an intermediate regime rather than increasing without bound. Although a larger \(N_{\mathrm{tot}}\) allows more squeezing without severely reducing \(N_s\), excessive squeezing still diverts energy away from the displacement and eventually weakens the overall benefit. This explains why, beyond a certain point, the optimal squeezing appears to saturate, as further squeezing provides little or no additional gain. The figure also confirms that our analysis is consistent with that of \cite{Bhadani2022OptimizedSqueezingPSKQSD}, since for QPSK we likewise obtain \(r^*=0\). This is physically reasonable, as the optimal noise distribution for QPSK is circularly symmetric. We also observe that the optimal squeezing increases with the constellation order. This is because, for higher-order constellations, the error is determined more strongly by the angular separation between neighboring symbols than by their radial separation. As a result, for the same \(N_{\mathrm{tot}}\), higher-order constellations favor a larger optimal squeezing value.

% Fig.~\ref{fig:three_vertical} reports SEP versus $N_{\mathrm{tot}}$ for $M\in\{8,16,32\}$. First, we note that for all $N_{\mathrm{tot}}$ the proposed approximation schemes appear consistent with the quantum analysis, since they predict a SEP within accuracy of 2-4 photons in general. Notably, for $M=8$, the proposed polar-domain scheme is more accurate that then approximate quantum model of \textcolor{blue}{[ref]}. Again, we see that the proposed polar-domain model is a better fit that the proposed Owens-$T$ model. The polar model is always within accuracy of 2 photons, while for $M=32$ the Owens-$T$ model is within 2-4 photons. That showcases the trade-off between accruacy and complexity. Both models are nonetheless considered consistent. Furthermore, in all cases, SEP decreases rapidly with $N_{\mathrm{tot}}$ and the gain from optimizing $r$ becomes more pronounced as $M$ increases, because the correct detection half-angle $\theta_0=\pi/M$ shrinks and phase concentration becomes increasingly more important. Another notable observatino is the gain in number of photons provided by squeezing. Specifically for $M=8$ we observe a slow start, however eventually we end up with a gain of $5$ photons, a $20\%$ reduction in terms of necessary photons. For $M=16$ we end up with an improvement of $20$ photons, a $50\%$ reduction, while for $M = 32$ we end up with an improvement of $\approx66\%$. This is due to the coherent modulation PSK reaching a floor faster, which is not the case for squeezed PSK, due to the coding gain provided by squeezing. 

Fig.~\ref{fig:three_vertical} reports the SEP as a function of \(N_{\mathrm{tot}}\) for \(M\in\{8,16,32\}\). We note that for each $N_\mathrm{tot}$, the optimal squeezing value was calculated based on an exhaustive search. First, we observe that, for all values of \(N_{\mathrm{tot}}\), the proposed approximation schemes remain consistent with the full quantum analysis, as they predict the SEP with an accuracy corresponding in general to about \(2\) to \(4\) photons. Notably, for \(M=8\), the proposed schemes are more accurate than the approximate quantum model, with the polar-domain model providing a closer fit than the $T$-Owen's model. The polar-domain model remains within an accuracy of about \(2\) photons, whereas for \(M=32\) the $T$-Owen's  model is typically within \(2\) to \(4\) photons. This highlights the tradeoff between accuracy and complexity. Even so, both approximation schemes can be regarded as consistent with the full quantum treatment. Furthermore, in all cases the SEP decreases rapidly with increasing \(N_{\mathrm{tot}}\), and the benefit of optimizing \(r\) becomes more pronounced as \(M\) increases. This is because the correct detection half-angle, \(\theta_0=\pi/M\), becomes smaller for higher-order constellations, making phase concentration increasingly important. Another notable observation is the photon-efficiency gain provided by squeezing. For \(M=8\), the gain is modest at first, but eventually reaches about \(5\) photons, corresponding to roughly a \(20\%\) reduction in the required photon number. For \(M=16\), the improvement reaches about \(20\) photons, corresponding to nearly a \(50\%\) reduction, while for \(M=32\) the photon reduction is approximately \(66\%\). This behavior arises because the conventional quantum PSK with coherent modulation reaches its error floor faster, whereas the proposed PSK with phase squeezed continues to benefit from the SNR gain caused by squeezing.

\begin{figure}[t]
    \centering
    \begin{subfigure}{\columnwidth}
        \centering
        \includegraphics[width=\linewidth]{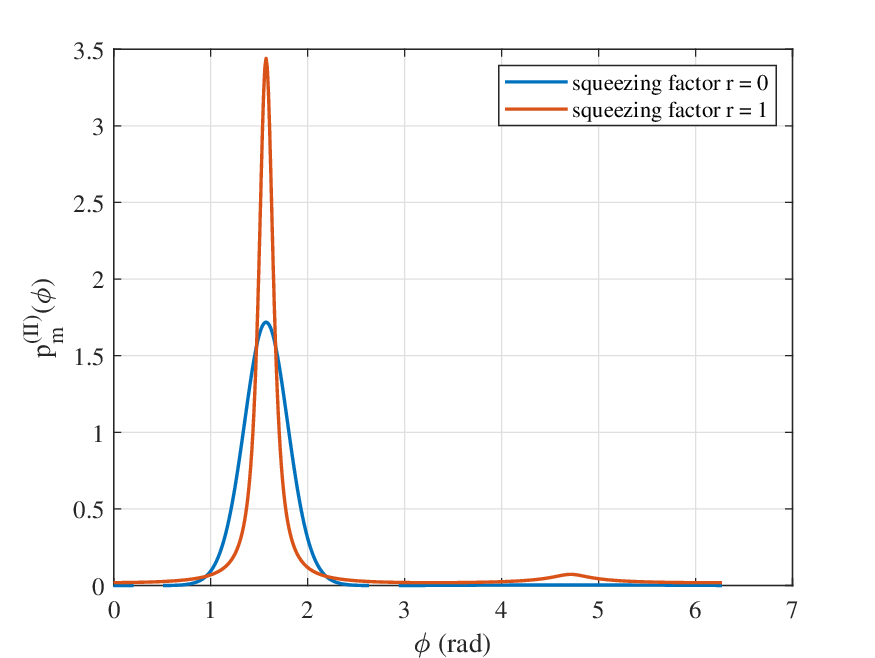}
        \caption{The phase distribution, $N_\mathrm{tot}=5$.}
        \label{fig:pdf1}
    \end{subfigure}
    \begin{subfigure}{\columnwidth}
        \centering
        \includegraphics[width=\linewidth]{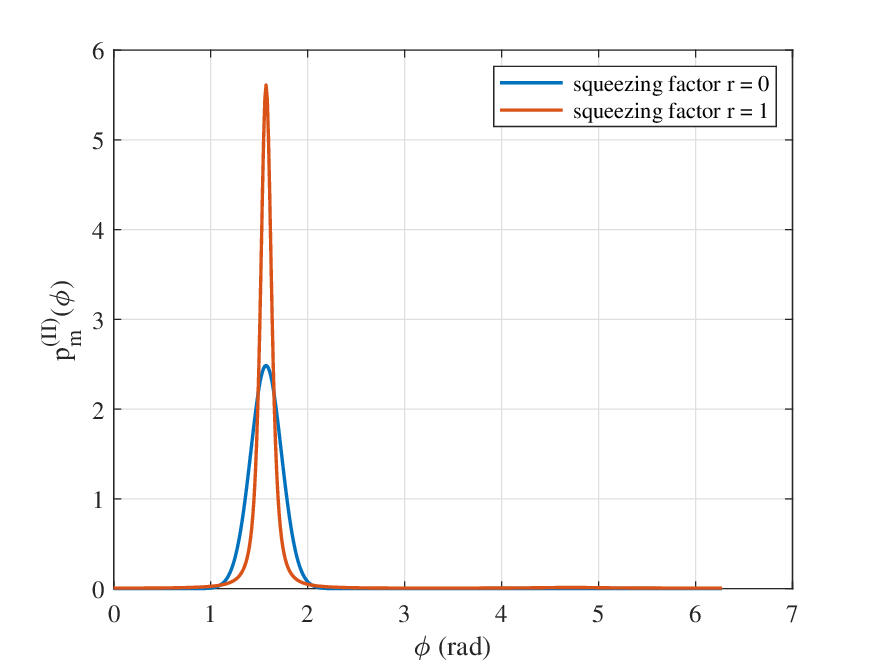}
        \caption{The phase distribution, $N_\mathrm{tot}=10$.}
        \label{fig:pdf2}
    \end{subfigure}
    \caption{The quantum state phase \& Mark-II phase noise distribution.}
    \label{fig:pdfs}
    \vspace{-.5cm}
\end{figure}

% Finally, Fig.~\ref{fig:pdfs} illustrates the symbols phase PDF modulo $2\pi$ based on $\eqref{eq:p_m_exact_fourier_TBD}$, for $M = 16$. For both $N=5,10$ the PDF resembles that of a normal distribution around the modulation phase of the symbol. Nevertheless, this is not the case $\forall r,N_s$ as we can immediately infer from  \eqref{eq: theta_state_marginal_closed}. However, for moderate values of $r$, and while $N_s$ increases the overall PDF could perhaps be well approximated using a normal distriubiton and a proper moment matching. Moreover, we observe that the PDF becomes more concentrated when both $r$ and $N_s$ increase. That's first due to squeezing reducing the phase uncertainty, and increased $N_s$ reducing the Mark-II error measurement. These resutls first verify the correctness of the Mark-II measurement for squeezed $M$-PSK. In addition, the observed narrowing of the total phase distribution explains why the SEP curves approach a steep falloff at high $N_{\mathrm{tot}}$, and why the optimum in $r$ shifts only mildly: once the distribution is already tightly concentrated, further squeezing produces diminishing returns unless it preserves $N_s$.
Finally, Fig.~\ref{fig:pdfs} illustrates the phase PDF, modulo \(2\pi\), obtained from \eqref{eq:p_m_markII_truncated}, for \(M=16\). For both \(N=5\) and \(N=10\), the PDF is centered around the symbol modulation phase and, near its main lobe, resembles a normal distribution. However, this behavior does not hold for all \(r\) and \(N_s\), as can be seen directly from \eqref{eq:theta_state_marginal_closed}. In general, the exact phase PDF is not Gaussian. Still, for moderate values of \(r\) and sufficiently large \(N_s\), the overall PDF could be approximated by a normal distribution after a suitable moment matching. We note that this observation could perhaps aid to simplify the polar-domain analysis. We also observe that the PDF becomes more concentrated as both \(r\) and \(N_s\) increase. This is because squeezing reduces the phase uncertainty of the quantum state, while a larger \(N_s\) reduces the Mark-II measurement error. Normally, the SEP is significantly reduced for higher values of $r$ and $N_s$ also. These results further strengthen the validity of the Mark-II phase measurement model for the squeezed \(M\)-PSK and validate the insights gathered by our theoretical and numerical analysis so far. 

\section{Conclusions}
This paper analyzed the SEP of phase-squeezed \(M\)-PSK under adaptive Mark-II phase measurement and a fixed total photon budget. By assigning symbol-dependent squeezing angles, the proposed constellation aligns the reduced-variance quadrature with the local phase-sensitive direction of each PSK symbol. We derived the POM-based SEP formulation in the Fock basis and complemented it with two statistical approaches of lower complexity, specifically, a polar-domain approximation with an explicit Mark-II angular convolution and a closed-form effective tangential variance approximation involving Owen's \(T\)-function. The results showed that QPSK does not benefit from squeezing, whereas higher-order \(M\)-PSK constellations achieve substantial SEP and photon-efficiency gains, with the improvement increasing with the modulation order. Moreover, the proposed approximations closely tracked the POM analysis thus offering a tractable evaluation alternative. Future work includes extending the analysis to turbulent FSO channels, and generalizing the framework to entanglement-assisted quantum communication.

\appendices

\section{Proof of Proposition 1}
In this appendix,  we provide the proof of Proposition 1. By using the joint polar PDF in~\eqref{eq:joint_rho_theta_state}, the
state-induced phase marginal is given as
\begin{equation}
\begin{aligned}
  f_{\Theta_{\mathrm{st}}}(\theta)
   &=\\
   &\int_{0}^{\infty}
       \!\!\frac{\rho}{2\pi\sigma_r\sigma_t}
       \exp\!\left(
         -\frac{(\rho\cos\theta-R)^2}{2\sigma_r^2}
         -\frac{(\rho\sin\theta)^2}{2\sigma_t^2}
       \right)\,d\rho .
  \label{eq:theta_state_marginal_start}
\end{aligned}
\end{equation}
For each fixed \(\theta\), the exponent is a quadratic polynomial in
\(\rho\). Expanding \((\rho\cos\theta-R)^2\) and collecting the
\(\rho^2\), \(\rho\), and all constant terms we can define
\begin{equation}
  A(\theta)
    \triangleq \frac{\cos^2\theta}{\sigma_r^2}
             +  \frac{\sin^2\theta}{\sigma_t^2},
  \qquad
  B(\theta) \triangleq \frac{R\cos\theta}{\sigma_r^2},
  \qquad
  C \triangleq \frac{R^2}{\sigma_r^2}.
  \label{eq:A_B_C_def}
\end{equation}
Then, we have the following
\begin{equation}
  -\frac{(\rho\cos\theta-R)^2}{2\sigma_r^2}
  -\frac{(\rho\sin\theta)^2}{2\sigma_t^2}
  =
  -\frac{A(\theta)}{2}\rho^2+B(\theta)\rho-\frac{C}{2}.
\end{equation}
Therefore,
\begin{equation}
  f_{\Theta_{\mathrm{st}}}(\theta)
   =
   \frac{e^{-C/2}}{2\pi\sigma_r\sigma_t}
   \int_{0}^{\infty}
       \rho
       \exp\!\left(
         -\frac{A(\theta)}{2}\rho^2+B(\theta)\rho
       \right)d\rho .
  \label{eq:theta_state_marginal_IAB}
\end{equation}
This suggests introducing the auxiliary integral
\begin{equation}
  I(A,B)
    \triangleq
    \int_{0}^{\infty}
      \rho
      \exp\!\left(
        -\frac{A}{2}\rho^2+B\rho
      \right)d\rho,
  \qquad A>0.
  \label{eq:I_AB_def}
\end{equation}
Then, it remains to evaluate \(I(A,B)\). Completing the square term inside the exponential of \eqref{eq:I_AB_def} gives
\begin{equation}
  -\frac{A}{2}\rho^2+B\rho
  =
  -\frac{A}{2}\left(\rho-\frac{B}{A}\right)^2
  +
  \frac{B^2}{2A}.
\end{equation}
Thus,
\begin{equation}
  I(A,B)
  =
  e^{B^2/(2A)}
  \int_{0}^{\infty}
  \rho
  \exp\!\left[
    -\frac{A}{2}\left(\rho-\frac{B}{A}\right)^2
  \right]d\rho .
\end{equation}
Using the change of variables \(x=\rho-B/A\), we obtain
\begin{equation}
  I(A,B)
  =
  e^{B^2/(2A)}
  \int_{-B/A}^{\infty}
  \left(x+\frac{B}{A}\right)e^{-Ax^2/2}\,dx,
\end{equation}
while by splitting the integral into two parts we get
\begin{equation}
\begin{aligned}
  I(A,B)
  &=
  e^{B^2/(2A)}
  \int_{-B/A}^{\infty}x e^{-Ax^2/2}\,dx  \\
  &\quad
  +\frac{B}{A}e^{B^2/(2A)}
  \int_{-B/A}^{\infty}e^{-Ax^2/2}\,dx .
\end{aligned}
\end{equation}
The first integral can be evaluated directly as follows
\begin{equation}
  \int_{-B/A}^{\infty}x e^{-Ax^2/2}\,dx
  =
  \left[-\frac{1}{A}e^{-Ax^2/2}\right]_{-B/A}^{\infty}
  =
  \frac{1}{A}e^{-B^2/(2A)}.
\end{equation}
For the second integral, if we set \(y=\sqrt{A/2}\,x\), we have
\begin{equation}
\begin{aligned}
  \int_{-B/A}^{\infty}e^{-Ax^2/2}\,dx
  &=
  \sqrt{\frac{2}{A}}
  \int_{-B/\sqrt{2A}}^{\infty}e^{-y^2}\,dy  \\
  &=
  \sqrt{\frac{\pi}{2A}}\,
  \erfc\!\left(-\frac{B}{\sqrt{2A}}\right).
\end{aligned}
\end{equation}
Combining these two terms gives the following
\begin{equation}
  I(A,B)
  =
  \frac{1}{A}
  +
  \frac{B}{A}\sqrt{\frac{\pi}{2A}}
  \exp\!\left(\frac{B^2}{2A}\right)
  \erfc\!\left(-\frac{B}{\sqrt{2A}}\right),
  \label{eq:I_AB_closed_form}
\end{equation}
which is valid for all \(A>0\). Finally, using
\(A=A(\theta)\), \(B=B(\theta)\), and \(C=R^2/\sigma_r^2\)  and substituting 
\eqref{eq:I_AB_closed_form} into (68) completes the proof.
\section{Proof of Proposition~2}
Here, we provide the proof of Proposition 2. First, we provide the following auxiliary lemma, whose proof is given in Appendix~C, which will be used to 
evaluate the SEP in terms of Owen's \(T\)-function.

\begin{lemma}
Let
\begin{equation}
  I(x_0,\gamma,\beta)
  \triangleq
  \int_{x_0}^{\infty}
    \phi(x)\Phi(\gamma+\beta x)\,dx,
  \label{eq:lemma_I_general}
\end{equation}
where \(\phi(\cdot)\) and \(\Phi(\cdot)\) denote the standard normal PDF
and CDF, respectively. The, for \(x_0\gamma\neq 0\), it holds
\begin{equation}
\begin{aligned}
  I(x_0,\gamma,\beta) 
  &=\\ 
&  \frac{1}{2}
  \Phi\!\left(\frac{\gamma}{\sqrt{1+\beta^2}}\right)
  -\frac{1}{2}\Phi(x_0)
  +T\!\left(x_0,\frac{\gamma+\beta x_0}{x_0}\right) \\
  +T\!&\left(
      -\frac{\gamma}{\sqrt{1+\beta^2}},
      \frac{\gamma\beta+x_0(1+\beta^2)}{\gamma}
    \right)
  +\frac{1}{2}\mathbf{1}_{\{\gamma/x_0<0\}},
  \label{eq:lemma_I_general_closed}
\end{aligned}
\end{equation}
where \(T(\cdot,\cdot)\) denotes Owen's \(T\)-function.
\label{lem:truncated_gaussian_cdf}
\end{lemma}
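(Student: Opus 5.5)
The lemma concerns the truncated integral $I(x_0,\gamma,\beta)=\int_{x_0}^\infty\phi(x)\Phi(\gamma+\beta x)\,dx$, and I would prove it by reading it as a bivariate normal probability and then applying Owen's classical decomposition of the bivariate normal CDF into $T$-functions.

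\textbf{Probabilistic representation.} Let $X,W$ be independent standard normals. Then
\[
\Phi(\gamma+\beta x)=\Pr\{W-\beta X\le\gamma\mid X=x\},
\]
so that
\[
I(x_0,\gamma,\beta)=\Pr\{X>x_0,\;W-\beta X\le\gamma\}.
\]
Normalize $V=(\beta X-W)/\sqrt{1+\beta^2}$. This is a standard normal with correlation $\rho_c=\beta/\sqrt{1+\beta^2}$ with $X$. Writing $k=-\gamma/\sqrt{1+\beta^2}$, we get
\[
I=\Pr\{X>x_0,\;V\ge k\}=1-\Phi(x_0)-\Phi(k)+\Phi_2(x_0,k;\rho_c),
\]
where $\Phi_2$ is the standard bivariate normal CDF.

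\textbf{Owen's decomposition.} Next I would invoke Owen's identity \cite{Owen1956}, valid for $hk\neq0$:
\[
\Phi_2(h,k;\rho)=\tfrac12\Phi(h)+\tfrac12\Phi(k)-T\!\Big(h,\tfrac{k-\rho h}{h\sqrt{1-\rho^2}}\Big)-T\!\Big(k,\tfrac{h-\rho k}{k\sqrt{1-\rho^2}}\Big)-\tfrac12\mathbf 1_{\{hk<0\}}.
\]
The hypothesis $x_0\gamma\neq0$ is exactly the condition $hk\neq0$ needed here.

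\textbf{Simplifying the arguments.} Substitute $h=x_0$, $k$, and $\rho_c$, using $\sqrt{1-\rho_c^2}=1/\sqrt{1+\beta^2}$.
\begin{itemize}
\item The first $T$-argument becomes $-(\gamma+\beta x_0)/x_0$.
\item The second becomes $-(x_0(1+\beta^2)+\gamma\beta)/\gamma$.
\end{itemize}
Then I would use the oddness of Owen's $T$ in its second argument, $T(h,-a)=-T(h,a)$, together with the evenness in its first argument, $T(-h,a)=T(h,a)$. These flip the signs and turn the two subtracted terms into the added terms
\[
T\!\Big(x_0,\tfrac{\gamma+\beta x_0}{x_0}\Big)+T\!\Big(-\tfrac{\gamma}{\sqrt{1+\beta^2}},\tfrac{\gamma\beta+x_0(1+\beta^2)}{\gamma}\Big)
\]
that appear in the statement.

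\textbf{Collecting constant terms.} Use $1-\Phi(k)=\Phi(\gamma/\sqrt{1+\beta^2})$. Combined with the $\tfrac12\Phi$ terms, this leaves $\tfrac12\Phi(\gamma/\sqrt{1+\beta^2})-\tfrac12\Phi(x_0)$, with the residual constants and the indicator to reconcile. Note that $hk<0$ is equivalent to $\gamma/x_0>0$. With $\mathbf 1_{\{\gamma/x_0>0\}}=1-\mathbf 1_{\{\gamma/x_0<0\}}$, the term $1-\tfrac12\mathbf 1_{\{\gamma/x_0>0\}}$ becomes $\tfrac12+\tfrac12\mathbf 1_{\{\gamma/x_0<0\}}$. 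Here the leftover $\tfrac12$ has to cancel against the half-sum of $\Phi$'s, which must be checked with care.

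\textbf{Main obstacle and fallback.} The sign and branch bookkeeping is the main obstacle: the indicator, and the sign of $\rho$ inside Owen's formula. I would verify the final identity at a limiting case, for example $\beta=0$ (giving $I=\Phi(\gamma)(1-\Phi(x_0))$) and $x_0\to-\infty$. If the classical identity gives trouble, a self-contained route is also available. Differentiate $I$ in $\gamma$ to get $\phi\!\big(\gamma/\sqrt{1+\beta^2}\big)\,\bar\Phi(\cdot)/\sqrt{1+\beta^2}$, differentiate the right-hand side using $\partial_h T$ and $\partial_a T$, and match the two, fixing the integration constant at $\gamma\to\pm\infty$.
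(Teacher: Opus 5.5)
Your proof is correct and is essentially the paper's proof. Both rewrite $I$ as a bivariate normal probability and apply Owen's $T$-function decomposition of $\Phi_2$. The only difference is the orthant you work in. You treat the upper orthant $\Pr\{X>x_0,\,V\ge k\}$ with correlation $+\beta/\sqrt{1+\beta^2}$, so you need the oddness of $T$ in its second argument and the complementary indicator, but you get $\Phi(\gamma/\sqrt{1+\beta^2})$ for free from $1-\Phi(k)$. The paper instead writes $I=J(\infty)-J(x_0)$ with correlation $-\beta/\sqrt{1+\beta^2}$ and quotes $J(\infty)$ as a separate Gaussian identity.

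The constant bookkeeping you flagged does close. Since $-\tfrac12\Phi(k)=-\tfrac12+\tfrac12\Phi(\gamma/\sqrt{1+\beta^2})$, this cancels the leftover $\tfrac12$ and leaves exactly $\tfrac12\mathbf{1}_{\{\gamma/x_0<0\}}$.
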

\begin{IEEEproof}
    The proof is given in Appendix C.
\end{IEEEproof}

We now proceed with the calculation of \eqref{eq:Pc_uv_double_integral}. It is easy to show that the inner integral of
\eqref{eq:Pc_uv_double_integral} over \(v\) is given as
\begin{equation}
\begin{aligned}
  &\int_{-(R+u)\tan\theta_0}^{(R+u)\tan\theta_0}
        f_v(v;\sigma_{t,\mathrm{eff}}^2)\,dv  =
  \erf\!\left(
        \frac{(R+u)\tan\theta_0}
        {\sigma_{t,\mathrm{eff}}\sqrt{2}}
      \right).
\end{aligned}
\label{eq:v_inner_erf}
\end{equation}
Then, substituting \eqref{eq:v_inner_erf} into
\eqref{eq:Pc_uv_double_integral}, we obtain
\begin{equation}
\begin{aligned}
  &P_{\mathrm{c}}(r,N_s)
  =\\
&  \int_{-R}^{\infty}
    \frac{1}{\sqrt{2\pi}\sigma_r}
    \exp\!\left(-\frac{u^2}{2\sigma_r^2}\right)
    \erf\!\left(
      \frac{(R+u)\tan\theta_0}
      {\sigma_{t,\mathrm{eff}}\sqrt{2}}
    \right)du .
\end{aligned}
\label{eq:Pc_uv_1D}
\end{equation}
Using the property \(\erf(z)=2\Phi(\sqrt{2}z)-1\), and defining $a \triangleq \frac{\tan\theta_0}{\sigma_{t,\mathrm{eff}}}$,
the error function in \eqref{eq:Pc_uv_1D} becomes
\begin{equation}
  \erf\!\left(
    \frac{(R+u)\tan\theta_0}
    {\sigma_{t,\mathrm{eff}}\sqrt{2}}
  \right)
  =
  2\Phi\!\left(a(R+u)\right)-1.
\end{equation}
Hence it holds
\begin{equation}
\begin{aligned}
  &P_{\mathrm{c}}(r,N_s)
  =\\
&  2\int_{-R}^{\infty}
      f_u(u;\sigma_r^2)
      \Phi\!\left(a(R+u)\right)du 
  -
  \int_{-R}^{\infty}f_u(u;\sigma_r^2)\,du .
\end{aligned}
\end{equation}
The second term is the Gaussian tail
\begin{equation}
  \int_{-R}^{\infty}f_u(u;\sigma_r^2)\,du
  =
  \Phi\!\left(\frac{R}{\sigma_r}\right).
\end{equation}
Therefore, by defining
\begin{equation}
  I_1
  \triangleq
  \int_{-R}^{\infty}
      f_u(u;\sigma_r^2)
      \Phi\!\left(a(R+u)\right)du,
  \label{eq:I1_original_u}
\end{equation}
we obtain
\begin{equation}
  P_{\mathrm{c}}(r,N_s)
  =
  2I_1-\Phi\!\left(\frac{R}{\sigma_r}\right).
  \label{eq:Pc_from_I1}
\end{equation}

It remains to calculate \(I_1\) by using 
Lemma~\ref{lem:truncated_gaussian_cdf}. Let us set the following
\begin{equation}
  x=\frac{u}{\sigma_r},
  \qquad
  x_0=-\frac{R}{\sigma_r},
  \qquad
  \gamma=aR,
  \qquad
  \beta=a\sigma_r .
\end{equation}
Then, \(f_u(u;\sigma_r^2)\,du=\phi(x)\,dx\), and
\(a(R+u)=\gamma+\beta x\). Thus,
\begin{equation}
  I_1
  =
  \int_{x_0}^{\infty}
    \phi(x)\Phi(\gamma+\beta x)\,dx.
  \label{eq:I1_phiPhi}
\end{equation}
Applying Lemma~\ref{lem:truncated_gaussian_cdf} gives
\begin{equation}
\begin{aligned}
  I_1
  &=
  \frac{1}{2}
  \Phi\!\left(\frac{\gamma}{\sqrt{1+\beta^2}}\right)
  -\frac{1}{2}\Phi(x_0)
  +T\!\left(x_0,\frac{\gamma+\beta x_0}{x_0}\right) \\
  &\quad
  +T\!\left(
      -\frac{\gamma}{\sqrt{1+\beta^2}},
      \frac{\gamma\beta+x_0(1+\beta^2)}{\gamma}
    \right)
  +\frac{1}{2}\mathbf{1}_{\{\gamma/x_0<0\}}.
  \label{eq:I1_from_lemma}
\end{aligned}
\end{equation}
We now connect this general identity to the parameters of the present
problem. Since
\begin{equation}
  x_0=-\frac{R}{\sigma_r},
  \qquad
  \gamma=\frac{R\tan\theta_0}{\sigma_{t,\mathrm{eff}}},
  \qquad
  \beta=\frac{\sigma_r\tan\theta_0}{\sigma_{t,\mathrm{eff}}},
\end{equation}
we have
\begin{equation}
  \gamma+\beta x_0=0,
\end{equation}
and therefore
\begin{equation}
  T\!\left(x_0,\frac{\gamma+\beta x_0}{x_0}\right)
  =
  T(x_0,0)=0.
\end{equation}
Moreover,
\begin{equation}
  \frac{\gamma\beta+x_0(1+\beta^2)}{\gamma}
  =
  -\frac{1}{\beta}.
\end{equation}
Since \(R>0\), \(\sigma_r>0\), \(\sigma_{t,\mathrm{eff}}>0\), and
\(\theta_0>0\), it follows that \(x_0<0\) and \(\gamma>0\). Hence, $\mathbf{1}_{\{\gamma/x_0<0\}}=1$. Substituting these identities into \eqref{eq:I1_from_lemma} yields
\begin{equation}
  I_1
  =
  \frac{1}{2}
  \Phi\!\left(\frac{\gamma}{\sqrt{1+\beta^2}}\right)
  -\frac{1}{2}\Phi(x_0)
  +
  T\!\left(
    -\frac{\gamma}{\sqrt{1+\beta^2}},
    -\frac{1}{\beta}
  \right)
  +\frac{1}{2}.
  \label{eq:I1_final_simplified}
\end{equation}
Finally, substituting \eqref{eq:I1_final_simplified} into
\eqref{eq:Pc_from_I1}, using \(x_0=-R/\sigma_r\), and applying
\(\Phi(z)+\Phi(-z)=1\), the terms
\(-\Phi(x_0)\), \(1\), and \(-\Phi(R/\sigma_r)\) cancel. Thus,
\begin{equation}
  P_{\mathrm{c}}(r,N_s)
  =
  \Phi\!\left(\frac{\gamma}{\sqrt{1+\beta^2}}\right)
  +
  2T\!\left(
    -\frac{\gamma}{\sqrt{1+\beta^2}},
    -\frac{1}{\beta}
  \right).
\end{equation}
Since \(P_{\mathrm{s}}(r,N_s)=1-P_{\mathrm{c}}(r,N_s)\), the proof of
Proposition~2 is complete.

\section{Proof of Lemma~1}

The purpose of this appendix is to prove the auxiliary identity used in
Appendix~B. The integral in \eqref{eq:lemma_I_general} is a truncated
Gaussian integral. The key step is to rewrite the truncated part as a
bivariate normal probability, which can then be evaluated by Owen's
representation of the bivariate normal CDF \cite{Owen1956,Owen1980,Komelj2023}. Starting from \eqref{eq:lemma_I_general}, we write
\begin{equation}
  I(x_0,\gamma,\beta)
  =
  J(\infty)-J(x_0),
  \label{eq:lemma_I_split}
\end{equation}
where
\begin{equation}
  J(\infty)
  =
  \int_{-\infty}^{\infty}
    \phi(x)\Phi(\gamma+\beta x)\,dx,
\end{equation}
and
\begin{equation}
  J(x_0)
  =
  \int_{-\infty}^{x_0}
    \phi(x)\Phi(\gamma+\beta x)\,dx .
\end{equation}
The full-space term is the standard Gaussian identity \cite{GradshteynRyzhik2015TableIntegrals}
\begin{equation}
  J(\infty)
  =
  \Phi\!\left(\frac{\gamma}{\sqrt{1+\beta^2}}\right).
  \label{eq:lemma_J_infty}
\end{equation}

We now evaluate the truncated term \(J(x_0)\). Let
\(X,Z\sim\mathcal N(0,1)\) be two independent random variables. Since
\(\Phi(\gamma+\beta x)=\Pr\{Z\le \gamma+\beta x\}\), we can write
\begin{equation}
  J(x_0)
  =
  \mathbb P\!\left(X\le x_0,\; Z\le \gamma+\beta X\right).
  \label{eq:lemma_Jx0_prob}
\end{equation}
Let us define
\begin{equation}
  Y\triangleq \frac{Z-\beta X}{\sqrt{1+\beta^2}} .
\end{equation}
Then \(Y\) has zero mean and unit variance. Moreover, since \(X\) and
\(Z\) are independent,
\begin{equation}
  \mathrm{Corr}(X,Y)
  =
  \mathbb E[XY]
  =
  -\frac{\beta}{\sqrt{1+\beta^2}}
  \triangleq \tilde{\rho}.
\end{equation}
Thus, \((X,Y)\) is a standard bivariate normal vector with correlation
\(\tilde{\rho}\). The event \(Z\le \gamma+\beta X\) is equivalent to
  $Y\le \frac{\gamma}{\sqrt{1+\beta^2}}$.
Therefore it holds that
\begin{equation}
  J(x_0)
  =
  \Phi_2\!\left(
      x_0,
      \frac{\gamma}{\sqrt{1+\beta^2}};
      -\frac{\beta}{\sqrt{1+\beta^2}}
    \right),
  \label{eq:lemma_Jx0_bivar}
\end{equation}
where \(\Phi_2(\cdot,\cdot;\rho)\) is the standard bivariate normal CDF
with correlation coefficient \(\rho\). We now apply Owen's representation of the bivariate normal CDF \cite{Owen1956,Komelj2023}. For
\((x,y)\neq(0,0)\),
\begin{equation}
\begin{aligned}
  \Phi_2(x,y;\rho)
  &=
  \frac{1}{2}\left(\Phi(x)+\Phi(y)\right) \\
  &
  -T\!\left(x,\frac{y-\rho x}{x\sqrt{1-\rho^2}}\right)
  -T\!\left(y,\frac{x-\rho y}{y\sqrt{1-\rho^2}}\right)
  -\beta_0,
\end{aligned}
\label{eq:Owen_bivar_identity}
\end{equation}
where
\begin{equation}
  \beta_0=
  \begin{cases}
    0, & xy>0 \text{ or } (xy=0 \text{ and } x+y\ge 0),\\[1mm]
    \tfrac{1}{2}, & \text{otherwise}.
  \end{cases}
  \label{eq:beta0_def}
\end{equation}
In \eqref{eq:Owen_bivar_identity}, we set $x=x_0$, $y=\frac{\gamma}{\sqrt{1+\beta^2}}$, and $\rho= -\frac{\beta}{\sqrt{1+\beta^2}}$. Since $\sqrt{1-\tilde{\rho}^2}
  =
  \frac{1}{\sqrt{1+\beta^2}}$,
the first \(T\)-function argument becomes
\begin{equation}
  \frac{y-\tilde{\rho}x}{x\sqrt{1-\tilde{\rho}^2}}
  =
  \frac{\gamma+\beta x_0}{x_0},
  \label{eq:first_T_arg_reduction}
\end{equation}
and the second becomes
\begin{equation}
  \frac{x-\tilde{\rho}y}{y\sqrt{1-\tilde{\rho}^2}}
  =
  \frac{\gamma\beta+x_0(1+\beta^2)}{\gamma}.
  \label{eq:second_T_arg_reduction}
\end{equation}
Thus it holds that
\begin{equation}
\begin{aligned}
  J(x_0)
  =
  \frac{1}{2}\Bigg[
      \Phi(x_0)
      +\Phi\!\left(\frac{\gamma}{\sqrt{1+\beta^2}}\right)
      -2T\!\left(x_0,\frac{\gamma+\beta x_0}{x_0}\right) \\
      -2T\!\left(
          -\frac{\gamma}{\sqrt{1+\beta^2}},
          \frac{\gamma\beta+x_0(1+\beta^2)}{\gamma}
        \right)
      -\mathbf{1}_{\{\gamma/x_0<0\}}
    \Bigg].
  \label{eq:lemma_Jx0_final}
\end{aligned}
\end{equation}
The indicator term is the sign correction inherited from \(\beta_0\).
Indeed, because \(y=\gamma/\sqrt{1+\beta^2}\), the condition \(x_0y<0\)
is equivalent to \(\gamma/x_0<0\). Hence,
\begin{equation}
  2\beta_0=\mathbf{1}_{\{\gamma/x_0<0\}}.
\end{equation}
Finally, substituting \eqref{eq:lemma_J_infty} and
\eqref{eq:lemma_Jx0_final} into \eqref{eq:lemma_I_split} gives
\eqref{eq:lemma_I_general_closed}, completing the proof of Lemma~1.
\bibliographystyle{IEEEtran}
\bibliography{bibliography.bib}
\end{document}